\documentclass[submission,copyright,creativecommons]{eptcs}
\providecommand{\event} % Name of the event you are submitting to

\usepackage{graphicx}
\usepackage{nfPCFA-Makros}
\usepackage{amsmath}
\usepackage{amsthm}

\usepackage{calrsfs} %for the pretty L

\usepackage[all]{xy}

\usepackage{enumitem}

\theoremstyle{plain}
\newtheorem{theorem}{Theorem}
\newtheorem{corollary}{Corollary}
\newtheorem{lemma}{Lemma}

\theoremstyle{remark}
\newtheorem{remark}{Remark}
\newtheorem{proposition}{Proposition}
\newtheorem{case}{Case}
\newtheorem{subcase}{Subcase}
\numberwithin{subcase}{case}

\theoremstyle{definition}
\newtheorem{definition}{Definition}
\newtheorem{example}{Example}

\title{Parallel Communicating Finite Automata: The~Non-Forgetting Model}

\author{Jana Schulz
	\institute{Institute of Computer Science, University of Potsdam\\ An der Bahn 2, 14476 Potsdam, Germany}
	\email{jana.schulz@uni-potsdam.de}
}
\def\titlerunning{Parallel Communicating Finite Automata: The~Non-Forgetting Model}
\def\authorrunning{J. Schulz}

\begin{document}
	\tolerance=500
	\maketitle
	\begin{abstract}
		Parallel Communicating Finite Automata (PCFA) are systems of several finite automata that can communicate by requesting the state of another automaton.
		As an attempt to make PCFA as defined in~\cite{martin-videparallel2002} and~\cite{bordihncomputational2012} more realistic, the non-forgetting model is introduced where automata retain their own states while querying. As in previous publications, several variants of these automata systems are considered. The computational capacity of the non-forgetting model is investigated and compared to ''forgetting'' systems and multi-head finite automata.
		It is shown that most variants of non-forgetting PCFA are as powerful as multi-head automata in the deterministic and nondeterministic cases where the number of automata equals the number of heads. The only exception is the special case of deterministic centralized systems in non-returning mode. Here, a strict inclusion is proved. In the course of this proving, a proof from~\cite{bordihncomputational2012} is completed. With that, this paper answers some questions for nfPCFA that are open for the classical ''forgetting'' case.
		%
		%\keywords{automata systems \and communicating automata \and theory of computation}
	\end{abstract}

	\section{Introduction and Motivation}
	\textit{Parallel Communicating Finite Automata Systems} were first introduced in~\cite{martin-videparallel2002} and further investigated in~\cite{bordihncomputational2012} among others. This paper builds on the definition introduced in the latter.\\
	A Parallel Communicating Finite Automata System of degree~$k$ (denoted by~PCFA($k$)) is a system of~$k$ finite automata that work independently on the same input tape. The automata are not aware of each other's positions or movements on the input tape but can communicate via their states. In the original model, an automaton can enter a special \textit{query state} to request the state of another automaton in the system. When doing so, the querying automaton leaves and therefore forgets its previous state. When the request for communication is answered, the query state is replaced by the queried automaton's current state. If the queried automaton is also in a query state, this pending query must be answered first before any dependent communication can be resolved. 
	Therefore, when a PCFA enters a configuration with cyclic communication requests, the system is in a deadlock and halts. \\ 
	It has already been established that some variants of PCFA are as powerful as multi-head automata (see~\cite{martin-videparallel2002},~\cite{choudharyreturning2007} and~\cite{bordihncomputational2012}).
	In addition to computational capacities, other properties of PCFA have already been studied. This includes decidability and hierarchies in regard to the number of automata which were investigated in~\cite{bordihnundecidability2011}. Further, measures of communication have been examined in~\cite{bordihnmeasuring2015}. More recently, network topologies (see~\cite{moektopologies2024}) and productiveness and succinctness (see~\cite{xieparallel2025}) have been considered. It is to be noted that the authors of the latter used the definition introduced in~\cite{martin-videparallel2002}. \\	

	The mentioned characteristics of the original PCFA, namely forgetting the previous state and the possibility of entering a deadlock, 
	do not seem realistic when looking at current standards for computer networks or multi-agent systems but can rather be seen as problematic. Therefore, these problems will be addressed in this paper by introducing a new model of PCFA that can resolve cyclic communication requests and whose automata do not forget their current state when initiating communication. To reflect this, these newly introduced systems are termed ''non-forgetting''\footnote{The non-forgetting PCFA are not to be confused with non-forgetting restarting automata from~\cite{messerschmidthierarchy2011}.} which is denoted by additing the prefix ''nf'' to the abbreviation. To make a clear distinction, the original model of PCFA will be referred to as the ''forgetting'' model in some instances. \\
	
	Both PCFA and nfPCFA can work in different modes. Firstly, they are categorized according to determinism. Further, systems can be centralized, where only one automaton is capable of initiating communication. Finally, they can work in returning mode, where an automaton returns to its initial state after communicating its state to another automaton. These modes are independent of each other and their combinations lead to eight different variants of both forgetting and non-forgetting PCFA. \\
	Though nfPCFA work in a very similar way as the classic forgetting model, there are two major differences between PCFA and nfPCFA. For one, nfPCFA do not allow $\lambda$-steps. Instead their mode of movement is inspired by multi-head automata and allows stationary moves. This decision was made to simplify the extension to the two way case. Secondly, nfPCFA differ in the way they conduct communication. A query state consists of a query symbol, denoting which automaton in the system is queried, and the automaton's own state. 
	Besides allowing the automata to remember their previous states, this also enables any automaton to communicate its state when it is queried by another automaton while also upholding a query. As a consequence, all queries can be answered immediately, there are no delays caused by preliminary communications, and cyclic requests do not lead to a deadlock.
	When the communication is answered, the query symbol in the query state is replaced by the state of the queried automaton. Then, an aggregation function is used to aggregate the querying automatons own state with the received state into the subsequent state. \\

	In the next section, the definition of non-forgetting 
	PCFA will be introduced. Afterwards, the computational capacities will be investigated.
	All variants of non-forgetting PCFA are compared to multi-head finite automata as well as to the forgetting model. Supplementing the results for the forgetting model (see~\cite{martin-videparallel2002},~\cite{choudharyreturning2007} and~\cite{bordihncomputational2012}), it can be proved that all variants of the non-forgetting model are either equal to or properly included in the language families of multi-head finite automata. \\
	For the forgetting model, there are three variants whose relationship to their non-forgetting counterparts and  therefore to multi-head automata remains open for now (see Figure~\ref{fig1}). It is unclear whether nondeterministic centralized PCFA (in non-returning mode) are strictly less or equally as powerful as nondeterministic multi-head automata. For centralized PCFA in returning mode this question remains open in both the nondeterministic as well as the deterministic case. Non-forgetting PCFA give new characterizations of these open problems.
	\section{Definitions and Examples}
	We assume the reader's familiarity with fundamental concepts of formal languages and automata theory as in~\cite{rozenberghandbook1997} or~\cite{hopcroftintroduction1999}. In the following, the empty word is represented by~$\lambda$. The power set of set~$S$ is denoted by~$2^S$. The subset relationship between two sets is written as~$\subseteq$ and a proper subset is denoted by~$\subset$.
	For some word~$w$, its length is given by~$|w|$. We use the notation~$w(i)$ for the $i$-th symbol of~$w$. Further,~$w(i\colon)$ denotes the suffix~$u$ of word~$w$ that starts with the~$i$-th symbol $w(i)$, i.e.~$w=w(1\colon)$ and~ $w=w(1)w(2)w(3\colon)$. For any~$n>\len{w}$, we set~$w(n)=w(n\colon)=\lambda$. 
	\begin{definition}\label{DefnfPCFA}
		A \textit{non-forgetting Parallel Communicating Finite Automata System~$A$ of degree~$k$}, denoted by nfPCFA($k$), is defined as a \mbox{$(k+3)$-tuple} \[A=(\Sigma, A_1, A_2, \dots, A_k, Q, \eoi)\]
		\begin{tabular}{@{\hspace{0mm}} l@{\hspace{1mm}} c@{\hspace{1mm}} l}
			where &$\Sigma$ &is the finite set of input symbols,\\
			&$Q$&$=\{q_1,q_2,\dots,q_k\}$ is the finite set of query symbols, and \\ %Queries are not states!
			&$\eoi$ &$\notin \Sigma$ is the symbol to mark the end of input. \\	
		\end{tabular}\\
		
		\noindent
		For $\ik$, $A_i$ is defined as a $6$-tuple $\component{i}$\\
		\begin{tabular}{@{\hspace{0mm}} l@{\hspace{1mm}} c@{\hspace{1mm}} l}
			where &$S_i$ &is the finite set of innate states with $Q \cap S_i = \emptyset$, \\
			&$\delta_i$ &$:S_i \times (\Sigma \cup \{\eoi\}) \to 2^{(S_i  \cup (S_i \times (Q\setminus\{q_i\}))) \times \OneDir} $ is the transition function, \\ 
			&$\fct_i$&$:S_i \times \bigcup_{\jk %, i \neq j
			}{S_j} \to 2^{S_i} $ is the \fctname , \\
			%&$\fct_i:$&$S_i \times (\bigcup_{\jk, i \neq j}{S_j} \cup \bigcup_{\jk, i \neq j}{S_j}) \to 2^{S_i} \cup  2^{S_i \times (Q\setminus\{q_i\})} $ is the \fctname ,\\
			&$s_{0,i}$ &$\in S_i$ is the initial state, and\\
			&$F_i$ &$\subseteq S_i$ is the set of accepting states.
		\end{tabular}
	\end{definition}

	The automata $A_1, A_2, \dots, A_k$ are referred to as components of the nfPCFA($k$)~$A$. 
	Any nfPCFA($1$)~$A$ as defined above with a singular component~$A_1% = (S, \Sigma, \delta, \fct, s_{0}, F)
	$ is equivalent to an NFA $(S_1, \Sigma, \delta_1, s_{0,1}, F_1)$.\\
	The components of an nfPCFA work individually in synchronized steps on a shared read-only tape. 
	It is assumed that the components do not impede each others movements and are unaware of the other components' states or positions. The components cannot move past the end of input symbol, instead they will always perform a stationary move (denoted by~$\nostep$) when reading~$\eoi$. 
	\begin{definition}[Configuration]\label{DefConfig}
		A configuration of an nfPCFA($k$) $A=(\Sigma, A_1, A_2, \dots, A_k, Q, \eoi)$ is a $k$\nobreakdash-tuple $(s_1x_1,s_2x_2,\dots,s_kx_k)$ where $s_i \in S_i \cup (S_i \times Q) $ is the current state and~$x_i \in \Sigma^*\cdot \{ \eoi \}$ is the unread part of the tape inscription of component~$A_i$, for~$\ik$. In the given configuration, component~$A_i$ is currently reading~$x_i(1)$, i.e. the first symbol of~$x_i$. 
	\end{definition}
	At the beginning of any computation, all components are in their initial state and positioned on the leftmost symbol of the input word $w \in \Sigma^*$. Thus, the \textit{starting configuration} of nfPCFA($k$)~$A$ as defined in Definition~\ref{DefnfPCFA} is the tuple $(s_{0,1}w\eoi,s_{0,2}w\eoi,\dots,s_{0,k}w\eoi)$. \\
	The computation of a word always starts with an ordinary step in which the transition functions of all components are applied. An ordinary step can only be taken if all components~$A_i$, for~\mbox{$\ik$}, are in an \textit{innate state}~$s_i \in S_i$. 
	Any component~$A_i$ can request communication with another component by entering a \textit{query state}~\mbox{$(s_i, q_j) \in S_i \times Q$} 
	where the state~$s_i$ indicates the current innate state of component~$A_i$ and the query symbol~$q_j$ denotes that component~$A_j$ is requested to communicate its current innate state. When at least one component is in a query state, the automata system takes a single communication step to resolve all communications. First, all query symbols are replaced with the current innate state of the queried component. Then, for any~$A_i$ that innitiated communication, for $\ik$, the subsequent state~$p_i \in S_i$ is determined through the \fctname~$\fct_i$. For that purpose,  the component's current innate state~$s_i \in S_i$ and the received state~$s_j \in S_j$ are aggregated into the subsequent state~$p_i=\fct_i(s_i,s_j)$. 
	\begin{definition}[Successor Configuration Relation]\label{DefSuccConfigRel}
		Let $A=(\Sigma, A_1, A_2, \dots, A_k, Q, \eoi)$ be an nfPCFA($k$) with the components~ $\component{i}$, for~$\ik$.
		\begin{enumerate}[leftmargin=25pt]
			\item When all components are in an innate state, an \textbf{ordinary step} is performed. The successor configuration relation~$\tstep$ of an ordinary step is defined as
			$(s_1a_1y_1, s_2 a_2y_2, \dots, s_k a_ky_k) \tstep (p_1 z_1, p_2 z_2, \dots, p_k z_k)$
			with,  for all $\ik$,  $s_i \in S_i$, $a_iy_i \in \Sigma^* \cdot \{\eoi\}$, $(p_i,d_i) \in \delta_i(s_i,a_i)$ and 
			\begin{itemize}
				\item if $a_i\in \Sigma$ and $d_i = \rightstep$, then $z_i = y_i$, and  
				\item if $a_i = \eoi$ with $y_i = \lambda$ or if $d_i = \nostep$, then  $z_i = a_iy_i$.
			\end{itemize}
			%$z_i = y_i$ if $a_i\in \Sigma$ and $d_i = \rightstep$ or, otherwise, $z_i = a_iy_i$ if $a_i = \eoi$ with $y_i = \lambda$ or if $d_i = \nostep$.
			%		
			\item When at least one component is in a query state, a \textbf{communication step} is performed. The successor configuration relation~$\cstep$ of a communication step is defined as
			%\[(s_1 x_1, s_2 x_2, \dots, s_k x_k) \cstep(p_1 x_1, p_2 x_2, \dots, p_k x_k)\]
			$(s_1 x_1, s_2 x_2, \dots, s_k x_k) \cstep(p_1 x_1, p_2 x_2, \dots, p_k x_k)$
			where~$s_i \in S_i \cup (S_i \times Q)$ and $x_i \in \Sigma^* \cdot \{\eoi\}$, for $\ik$. For the subsequent state~$p_i \in S_i$ we distinguish between the following cases:  
			\begin{itemize}
				\item if $s_i = (t_i, q_j)$, that is with $t_i \in S_i$, $q_j \in Q$, $i\neq j$, and $s_j \in S_j$, then $p_i \in \fct_i(t_i, s_j)$,
				\item if $s_i = (t_i, q_j)$ and $s_j = (t_j, q_l)$, that is with $t_j \in S_j$, $q_l \in Q$, $j\neq l$, then $p_i \in \fct_i(t_i, t_j)$, and
				\item if $s_i \in S_i$, then $p_i = s_i$.
			\end{itemize}
		\end{enumerate}
		
	\end{definition}
	An nfPCFA \textit{halts} when the successor configuration is not defined. This is the case when either the transition function~$\delta_i$ or the \fctname~$\fct_i$ is undefined for the list of arguments in the current configuration for at least one component~$A_i$,~$\ik$. Notice that unlike the ''forgetting'' model, \mbox{nfPCFA} do not halt when a cyclic communication request occurs. The resolving of a cyclic communication will be demonstrated in Example~\ref{ExBasics}. \\
	Let~$\vdash^*$ denote the reflexive and transitive closure of the successor configuration relation~$\vdash$. \\
	The language~$L(A)$ accepted by nfPCFA($k$)~$A$ is the set of words~$w \in \Sigma^*$ such that there is some computation that begins with the initial configuration $(s_{0,1}w\eoi,s_{0,2}w\eoi,\dots,s_{0,k}w\eoi)$ and ends when at least one component~$A_i$ of~$A$ halts in an accepting state~$p_i \in F_i$, for~$\ik$. 
	
	\begin{definition}[Accepted Language]
		Let~$A$ be an nfPCFA($k$) as in Definition~\ref{DefnfPCFA}. The accepted language is defined as
		\begin{align*}
			L(A)=\{w\in\Sigma^* \mid & (s_{0,1}w\eoi,s_{0,2}w\eoi,\dots,s_{0,k}w\eoi) \vdash^* (p_1 a_1y_1, p_2 a_2y_2, \dots, p_k a_ky_k) \\
			&\text{ with } p_i \in F_i \text{ and } \delta_i(p_i,a_i)=\emptyset \text{ for some } \ik \}\, .
		\end{align*}
	\end{definition}
	\noindent
	Two nfPCFA~$A$ and~$B$ are considered equivalent when~$L(A)=L(B)$.
	\begin{lemma}
		For any nfPCFA($k$) there is an equivalent nfPCFA($k$)~$A'=(\Sigma, A_1',A_2', \dots, A_k', Q,\eoi)$ with components~$A_i'=(S_i',\Sigma,\delta_i',\kappa_i',s_{0,i}',F_i')$ where $\len{\fct_i'(p,s)}\leq1$, for all $1\leq i \leq k$ and $p,s\in \bigcup_{1\leq j \leq k}S_j'$.
	\end{lemma}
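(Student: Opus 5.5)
We prove the statement by moving the nondeterminism of the aggregation function into the transition function. The trick is to let a querying component guess, at the moment it enters a query state, which result of $\fct_i$ it will adopt. The guess is stored in its innate state. Later we let $\fct_i'$ merely check that guess, so that it becomes a partial function with $\len{\fct_i'(p,s)}\leq 1$.

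\textbf{Construction.} Let $A=(\Sigma, A_1,\dots,A_k,Q,\eoi)$ be given. For each component, the new innate states are
\[
S_i' = S_i \cup \{\langle t,p\rangle \mid t,p \in S_i\}.
\]
A state $\langle t,p\rangle$ means: the component is in state $t$ and has committed to continue in $p$ after the pending communication. We keep $s_{0,i}'=s_{0,i}$ and $F_i'=F_i$. The new transition function $\delta_i'$ agrees with $\delta_i$ on moves into innate states. Whenever $((t,q_j),d)\in\delta_i(s,a)$, we instead put $((\langle t,p\rangle,q_j),d)\in\delta_i'(s,a)$ for every $p\in S_i$. The new aggregation function is
\[
\fct_i'(\langle t,p\rangle, r)=\{p\} \text{ if } p\in\fct_i(t,r),
\]
and $\emptyset$ otherwise.

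\textbf{Queried states.} The received state $r$ may itself be a tagged state $\langle t_j,p_j\rangle$. This happens when the queried component is also querying: by Definition~\ref{DefSuccConfigRel} it then communicates its own innate part, which is now the tagged state. In that case we define $\fct_i'(\langle t,p\rangle,\langle t_j,p_j\rangle)=\{p\}$ iff $p\in\fct_i(t,t_j)$, so $\fct_i'$ simply projects away the tag. Tagged states only ever appear inside query states, because every communication step immediately replaces them by a plain state of $S_i$. So they never occur in ordinary steps, and $\delta_i'$ need not be defined on them. The bound $\len{\fct_i'(\cdot,\cdot)}\leq 1$ holds by construction, with $\fct_i'$ empty on all remaining arguments.

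\textbf{Correctness.} We set up a step-by-step correspondence of computations.
\begin{itemize}
\item A computation of $A$ that resolves a query with $p\in\fct_i(t,r)$ is simulated by $A'$ guessing the tag $p$ in the preceding ordinary step.
\item Conversely, any successful communication step of $A'$ projects to a communication step of $A$.
\item Halting with acceptance is preserved. Accepting configurations consist of innate states in $S_i$, on which $\delta_i'$ and $\delta_i$ coincide.
\end{itemize}

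\textbf{Main obstacle.} The hardest part is the bookkeeping around halting. A wrong guess in $A'$ makes $\fct_i'$ undefined, and by definition the system then halts. We must check that such spurious halts cannot cause acceptance. A halt of this kind happens during a communication step, at which point at least one component is in a query state. Other components might still be in accepting innate states; if we read the acceptance condition literally (some component in $F_i$ with $\delta_i$ empty), this could add words. To rule this out, we note that the acceptance condition refers to $\delta_i(p_i,a_i)=\emptyset$ for the innate component. Any configuration of $A'$ also arises projected in $A$ whenever the guesses so far are correct. A halted configuration after a wrong guess projects to a configuration reachable in $A$, which would likewise satisfy the acceptance condition there, since $\delta_i'$ and $\delta_i$ agree on innate states. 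We verify this carefully using that tags only affect query states.
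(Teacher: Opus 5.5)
Your proof is correct and uses the same idea as the paper: enlarge $S_i$ to $S_i\cup S_i\times S_i$, let the ordinary step that enters a query state also guess the aggregation result, and have $\kappa_i'$ simply return the stored guess. Your version is more careful than the paper's, which sets $\kappa_i'((p',p''),s)=p''$ for every $s\in S_j$ without checking $p''\in\kappa_i(p',s)$ and leaves $\kappa_i'$ undefined when the queried component is itself in a tagged query state; the paper's one advantage is that it tags only queries where $\kappa_i$ is multi-valued, so systems whose $\kappa_i$ is already single-valued (e.g.\ deterministic ones) are left unchanged.
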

	\begin{proof}	
		Let~\nfPCFA{}{k} be an nfPCFA($k$) with components \mbox{$\component{i}$} for all~$\ik$.
		We construct an nfPCFA($k$) $A'=(\Sigma, A_1', A_2',\dots, A_k', Q, \eoi)$ with, for $\ik$, the components~\mbox{$A_i'=(S_i \cup S_i \times S_i, \Sigma, \delta_i', \fct_i', s_{0,i}, F_i)$}. %The transition function $\delta'_i$ and the \fctname~$\fct'_i$ are defined as follows. 
		For $a\in \Sigma\cup\{\eoi\}$ and $p\in S_i$, the set of possible subsequent states is~$\delta_i'(p,a)=D_{p,a,i,1} \cup D_{p,a,i,2} \cup D_{p,a,i,3}$ with the latter three sets defined as follows.
		\begin{itemize}
			\item If the original transition function $\delta_i(p,a)$ leads to an innate state $p'\in S_i$, i.e. $(p',d)\in\delta_i(p,a)$ for some $d\in \OneDir$, the follow up state remains the same and no changes to the transition or \fctname \ are needed. Therefore, set $$D_{p,a,i,1}=\{(p',d) \in S_i\times \OneDir \mid (p',d)\in\delta_i(p,a) \}.$$ In this case, the computation of an input symbol obviously leads to the same resulting set of states as in system~$A$. Since $\fct_i$ is not used when no communication takes place,~$\fct_i'$ is not used either.
			\item If $\delta_i(p,a)$ leads to a query state $(p',q_j)\in S_i \times Q$ for some $1\leq j\leq k$ but $\len{\fct_i(p',s)}\leq 1$ for all states~$s\in S_j$, the condition is fulfilled and no changes are required. Therefore, set 
			$$D_{p,a,i,2}=\{((p',q_j),d)  \in S_i\times Q\times \OneDir \mid ((p',q_j),d)\in\delta_i(p,a) \text{ with }\len{\fct_i(p',s)}\leq1 \text{ for all } s\in S_j \}.$$ Further, set $\fct_i'(p',s)=\fct_i(p',s)$ for all $((p',q_j),d)\in D_{p,a,i,2}$ and $s\in S_j$. In this case, the computation of an input symbol obviously leads to the same  resulting set of states as in system~$A$.
			\item If $\delta_i(p,a)$ leads to a query state $(p',q_j)\in S_i \times Q$ for some $1\leq j\leq k$ where $\len{\fct_i(p',s)}> 1$ for at least one state $s\in S_j$, shift the set of subsequent states to the transition function. Note that all possible states have to be included as we do not know which state will be communicated by the queried component. 
			Therefore, set
			\begin{align*}
				D_{p,a,i,3}=\{ &(((p',p''),q_j),d) \in S_i\times S_i\times Q\times \OneDir \mid\  ((p',q_j),d)\in\delta_i(p,a) \\ &\text{ with }\len{\fct_i(p',s')}>1  \text{ for some } s'\in S_j \text{ and } p''\in \fct_i(p',s) \text{ for some }  s\in S_j \}.
			\end{align*}
			Further, set $\fct_i'((p',p''),s)=p''$ for each $(((p',p''),q_j),d)\in D_{p,a,i,3}$ and each $s\in S_j$. \\
			Let there be $p \in S_i$, $s\in S_j$ for some $1\leq j,i \leq k$, $i\neq j$ where $\kappa_i(p,s)=\{s_1, \dots, s_m\}\subseteq S_i$, i.e. $\len{\kappa_i(p,s)}=m$, for $m> 1$. Then, there have to be $p'\in S_i$, $a\in\Sigma\cup\{\eoi\}$ with $((p,q_j),d_i)\in\delta_i(p',a)$ for some $d_i \in \OneDir$.
			Following the above definition, we have $\{(((p,s_l),q_j),d_i)\}_{1\leq l \leq m}\subseteq \delta_i'(p',a)$ and $\kappa_i'((p,s_l),s)=s_l$ for all $1\leq l \leq m$. With that, the computation of input symbol~$a$ in state~$p'$ through (partial) state~$(p,q_j)$ leads to the same resulting set of states as in system~$A$.
		\end{itemize}
		Thus, the same pair of state and input always leads to the same resulting set of states for both automata systems~$A$ and~$A'$ while $\len{\fct_i'(p,a)}\leq 1$ for all $\ik$, $p \in S_i$ and $a\in \Sigma\cup\{\eoi\}$.
	\end{proof}
	
	\begin{remark}
		Consequently, the \fctname \ is defined as \mbox{$\fct_i:\ S_i \times \bigcup_{\jk%, i\neq j
			}{S_j} \to S_i \cup \{\emptyset\}$} from this point forward.
	\end{remark}
	Like PCFA, nfPCFA can operate in a number of different modes which can be combined. Let the system~$A$ be defined as in Definition~\ref{DefnfPCFA}.
	\begin{description}	
		\item[nfDPCFA($k$)] In a \textit{deterministic} system, which is  denoted by the addition of the letter D, the transition functions~$\delta_i$ of all components~$A_i$, $\ik$, are deterministic, that is $\len{\delta_i(s_i,a_i)}\leq1$ for all $s_i\in S_i$ and~$a_i \in \Sigma \cup \{\eoi\}$. 
		\item [nfCPCFA($k$)] In a \textit{centralized} system, which is denoted by adding the letter C, only one component can query others. This component will be referred to as the querying component. 
		\item[nfRPCFA($k$)] In a \textit{returning} system, which is denoted by the addition of the letter R, all components~$A_i$, for~$\ik$, return to their initial state~$s_{0,i}$ after being queried by another component. 
	\end{description}
	
	\begin{definition}[Successor Configuration Relation in Returning Mode]\label{DefnfRPCFA}
		The successor configuration relation in returning mode is defined as in Definition~\ref{DefSuccConfigRel} with one modification to the communication step: for all~$\ijk$, if there is at least one component~$A_i$ that is in a state~$s_i \in S_i \times \{q_j\}$, the subsequent state~$p_j$ of component~$A_j$ is the initial state~$s_{0,j}\in S_j$.  
		
	\end{definition} 
	\begin{example} \label{ExBasics}
		To illustrate how nfPCFA and especially the newly introduced \fctname \ work, a very simple example will be given. Since its purpose is solely to demonstrate how the computation works, the accepted language will not be regarded. The sets of accepting states are empty for all components and so is the system's accepted language. 
		Consider nfDPCFA($3$)~$A=(\{a,b\},A_1,A_2,A_3,\{q_1,q_2,q_3\},\eoi)$. For $1 \leq i \leq 3$ and $x \in \{a,b\}$, let $A_i=(\{s_{0,i},s_a, s_b \}, \{a,b\}, \delta_i, \fct_i, s_{0,i}, \emptyset)$ with
		\begin{align*}
			\delta_1(s_{0,1}, x) &= ((s_x, q_3),\rightstep) &\delta_i(s_a, x) &= ((s_x, q_{(i \text{ mod } 3) + 1}),\rightstep)  \\
			\delta_2(s_{0,2}, x) &= (s_x,\rightstep) 
			&\delta_i(s_b, x) &= (s_b ,\rightstep)  \\
			\delta_3(s_{0,3}, x) &= (s_x,\nostep) 	
			&	\fct_i(s_x,s_a) &= s_x %\text{\quad if } y = a 
			\\
			&&	\fct_i(s_x,s_b) &= s_b%\text{\quad   if } y = b 
		\end{align*}
		%	$\begin{array}{crl rl}
			%		\hspace{3em}& \delta_1(s_{0,1}, x) &= ((s_x, q_3),\rightstep) \hspace{3em} &\delta_i(s_a, x) &= ((s_x, q_{(i \text{ mod } 3) + 1}),\rightstep)  \\
			%		&\delta_2(s_{0,2}, x) &= (s_x,\rightstep) 
			%		&\delta_i(s_b, x) &= (s_b ,\rightstep)  \\
			%		&\delta_3(s_{0,3}, x) &= (s_x,\nostep) 	
			%		&	\fct_i(s_x,s_a) &= s_x %\text{\quad if } y = a 
			%		\\
			%		&&&	\fct_i(s_x,s_b) &= s_b %\text{\quad   if } y = b 
			%	% 
			%	\end{array}$\\
		%
		Note that~$A$ is deterministic and non-centralized. To showcase the differences between returning and non-returning systems, firstly observe the computation of the word $aba$ when the given automata system works in non-returning mode (i.e. $A$ is an nfDPCFA($3$)):
		\begin{align*}
			(s_{0,1} aba\eoi, s_{0,2}aba\eoi, s_{0,3} aba\eoi) 
			\,	& \tstep	((s_a,q_3) ba\eoi, s_a ba\eoi, s_a aba\eoi) 
			\, \cstep \,  (s_a ba\eoi, s_a ba\eoi, s_a aba\eoi)  \\
			& \tstep \,	((s_b,q_2) a\eoi, (s_b,q_3) a\eoi, (s_a,q_1) ba\eoi) 
			\, \cstep \, (s_b a\eoi, s_b a\eoi, s_b ba\eoi)  \\
			& \tstep \,(s_b \eoi, s_b \eoi, s_b a\eoi)
		\end{align*}
		First, an ordinary step is performed. Component $A_1$ enters the query state~$(s_a,q_3)$ denoting that~$A_1$ is in innate state $s_a$ and requests communication with~$A_3$. Next, a communication step is performed where~$A_1$ receives state $s_a$ from~$A_3$ and enters state~$\fct_1(s_a,s_a) = s_a$. The states of~$A_2$ and~$A_3$ are unchanged. \\
		Now, observe the computation of the same word when the automata system works in returning mode  (i.e.~$A$ is an nfDRPCFA($3$)). Note that queried components return to their initial state after % communicating their state to 
		answering a query. In the second step, which is again a communication step, component~$A_1$ enters the state~$\fct_1(s_a,s_a) = s_a$, the state of component~$A_2$ remains unchanged, and component~$A_3$ is set to its initial state because it was queried by~$A_1$.
		\begin{align*}
			(s_{0,1} aba\eoi, s_{0,2}aba\eoi, s_{0,3} aba\eoi)\,
			&\tstep \,	((s_a,q_3) ba\eoi, s_a ba\eoi, s_a aba\eoi) \,
			\, \cstep  \, (s_a ba\eoi, s_a ba\eoi, s_{0,3} aba\eoi)  \\
			&\tstep \,	((s_b,q_2) a\eoi, (s_b,q_3) a\eoi, s_a aba\eoi) 
			\,\cstep \, (s_b a\eoi, s_{0,2} a\eoi, s_{0,3} aba\eoi)  \\
			&\tstep \,	(s_b \eoi, s_a \eoi, s_a aba\eoi) 
		\end{align*}
		In both cases~$A$ halts because~$\delta_1$ and~$\delta_2$ are undefined for the current list of arguments. The input word~$aba$ is not accepted in either case since neither~$A_1$ nor~$A_2$ is in a accepting state when halting.
	\end{example}
	\section{Computational Capacity of nfPCFA}
	%hierarchy 
	\subsection{Relation to PCFA}
	\begin{theorem}\label{Th-PCFAinnfPCFA} For all $X \in \{CPCFA, RCPCFA\}$, $k\geq1$, the following holds:
		\[\text{1.\quad} \Lclassk{X}{k} \subseteq \Lclassk{nfX}{k} \text{\qquad and \qquad}
		\text{2.\quad} \Lclassk{DX}{k} \subseteq \Lclassk{nfDX}{k}.\]
	\end{theorem}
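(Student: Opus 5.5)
Simulate a centralized (forgetting) PCFA($k$) $B$, in the model of~\cite{bordihncomputational2012}, by an nfPCFA($k$) $A$ with the same number of components. Two differences must be handled: the forgetting model has $\lambda$-steps, while nfPCFA move right or stay ($\nostep$) on every step; and in the forgetting model a query state is replaced by the received state, while in nfPCFA the querying component keeps its own state and aggregates it with the answer via $\fct_i$.

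Centralization is what makes the construction simple. Only component $B_1$ may query, so a query of $B_1$ always receives an \emph{innate} state of $B_j$. Pending or cyclic queries therefore never occur, and each communication step of $B$ corresponds to exactly one communication step of $A$.

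\textbf{Step 1 (translating queries).} Whenever $B_1$ has a transition into query state $q_j$, $A_1$ gets a transition into $(p,q_j)$ with the same head movement. Here $p$ is a fresh innate state recording that a query is pending; it can be taken to be the state $B_1$ left, if needed. Set $\fct_1(p,s)=s$ for all $s\in S_j$, so that after communication $A_1$ holds exactly the state $B_1$ would have received. In returning mode, the returning semantics of Definition~\ref{DefnfRPCFA} already reset $A_j$ to $s_{0,j}$, just as in $B$. Non-querying components use only innate states, so their $\fct_i$ are irrelevant. Determinism carries over, because each deterministic transition of $B_i$ is mapped to one transition of $A_i$. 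This gives statement~2 alongside statement~1.

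\textbf{Step 2 ($\lambda$-moves to stationary moves).} A $\lambda$-transition $\delta(s,\lambda)\ni p$ of $B_i$ becomes a stationary transition $(p,\nostep)$ in $A_i$ on every input symbol $a$ (including $\eoi$) such that $B_i$ could take that $\lambda$-step while reading $a$. An ordinary reading step becomes $(p,\rightstep)$. In $B$ the reading head does not move on a $\lambda$-step, so the configurations of $A$ and $B$ remain step-for-step identical up to the bookkeeping states from Step~1. Synchrony is preserved, because in $B$ every component also performs exactly one action per ordinary step. I would then prove by induction on computation length that $B$ reaches a configuration $C$ if and only if $A$ reaches the corresponding configuration $C'$. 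The same induction treats communication steps one-to-one.

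\textbf{Step 3 (acceptance and halting), the main obstacle.} The main obstacle is matching the acceptance conditions. In the model of~\cite{bordihncomputational2012} acceptance may be defined differently: it may require the whole input to be read, or it may be checked when some component is in an accepting state, rather than when a component halts in an accepting state as in nfPCFA. There is also a subtlety in the deterministic case: $B$ could halt where $A$ would not, or vice versa. An example is $\delta$ being undefined on $\eoi$ while a $\lambda$-move is available.

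I would first normalize $B$ so that its acceptance coincides with ``some component halts in an accepting state''. One way is to have every component enter a fresh accepting sink that is undefined on all symbols exactly when $B$ would accept. Since acceptance depends only on local states and symbols, this can be done without increasing the number of components and without breaking determinism. Combining the three steps gives $L(A)=L(B)$ in all four variants.
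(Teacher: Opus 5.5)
Your proposal is correct and is essentially the paper's own sketch: the aggregation function is the projection $\fct_1(p,s)=s$ onto the received state, $\lambda$-moves become stationary moves via $\delta_i'(s,a)=\delta_i(s,\lambda)\times\{\nostep\}\cup\delta_i(s,a)\times\{\rightstep\}$, and determinism is inherited. Your Step~3 is largely unnecessary, because the nfPCFA acceptance condition is modeled on that of \cite{bordihncomputational2012} and the step-for-step translation preserves it; the point worth stating explicitly is instead that determinism of $\delta_i'$ relies on a deterministic component never having both a $\lambda$-move and a reading move from the same state.
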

	\begin{proof}[Sketch]
		For all $k \geq 1$, any centralized PCFA($k$)~$A$ can easily be simulated by an adequate nfPCFA($k$)~$A'$. Define the \fctname s $\fct_i'$ of components~$A_i'$ to be the projection on the second element of the received tuple: $\fct_i'(s_i,s_j)=s_j$ for $s_i\in S_i'$ and $s_j\in S_j'$, for all $\ijk$. In  non-forgetting systems, there are no~$\lambda$-steps but stationary moves can be performed instead. Therefore, define the transition functions as $\delta_i'(s_i,a)=\delta_i(s_i,\lambda) \times \{\nostep\} \cup \delta_i(s_i,a) \times \{\rightstep\}$ for all $s_i \in S_i$ and $a \in \Sigma \cup \{\eoi\}$.
		In the deterministic case, determinism is retained since $\len{\delta_i(s_i,a)}\leq 1$ for all~$\ik$, $s_i \in S_i$ and $a \in \Sigma \cup \{\eoi,\lambda\}$, and therefore~$\len{\delta_i'(s_i,a)}\leq 1$.%, determinism is retained. 
	\end{proof}

	In the non-centralized case, the simulation of a PCFA through an nfPCFA is not quite as simple. As stated in Definition~\ref{DefSuccConfigRel}, in nfPCFA all queries are answered in a single communication step. When the queried component has entered a query state itself, it communicates its current innate state without regard to any open queries. In PCFA on the other hand, any open query of a queried component needs to be answered before the resulting state is returned. 
	In the non-forgetting case, non-returning systems can wait for query results by marking the resulting states of the \fctname s and querying repeatedly until a resulting state is received. In returning systems, this is not possible since any ''test queries'' would disrupt the process of the queried component. 
	Nevertheless, the simulation of any kind of PCFA($k$) with an nfPCFA($k$) is definitely possible which will be shown in the following sections.
\subsection{Relation to multi-head automata}
	
	A $k$-NFA $N=(S,\Sigma,k,\delta,\eoi,s_0,F)$ is a one-way multi-head nondeterministic finite automaton with~$k$~heads on a single input tape (see e.g. \cite{ibarratwo-way1973}, \cite{holzermulti-head2009}). %Since only the one way case is considered, subsequently the $1:$ will be omitted and the automata will be referred to as just $k$-FA or $k$-headed automata. 
	We assume that the heads are not aware of each others movements and do not impede each other. The end of input is denoted by the symbol $\eoi \notin \Sigma$ which the heads cannot move beyond, i.e. they can only perform stationary moves when reading $\eoi$. The transition function~$\delta$ determines the subsequent state and if the heads should move or not (denoted by $1$ and $0$ respectively). It maps from $S \times (\Sigma \cup \{\eoi\})^k$ to $2^{S \times \{0,1\}^k}$. In the deterministic case, $|\delta(s,a_1,a_2,\dots,a_k)|\leq 1$ for all $s \in S$ and~$a_1,a_2,\dots
	,a_k \in \Sigma \cup \{\eoi\}$.
	
	\begin{theorem}\label{Th-nfPCFAkinkFA}	For all $X \in$ \allNPCFAs, $k\geq1$:
		\[\text{1.\quad } \Lclassk{\text{nf}X}{k} \subseteq  \LclasskFA{N}{k} \text{\qquad and \qquad} \\
		\text{2.\quad} \Lclassk{\text{nfD}X}{k}  \subseteq \LclasskFA{D}{k}.\]
	\end{theorem}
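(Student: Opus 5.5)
We prove the statement by a direct simulation. Given an nfPCFA($k$) $A=(\Sigma,A_1,\dots,A_k,Q,\eoi)$ of any of the variants $X$, we build a $k$-NFA $N$ whose $i$-th head plays the role of component $A_i$. The finite control of $N$ stores the whole tuple of current states. Its state set is essentially
\[
S=\prod_{i=1}^{k}\bigl(S_i\cup(S_i\times Q)\bigr),
\]
with initial state $(s_{0,1},\dots,s_{0,k})$. This works because the components have no other memory besides their states, and their tape positions coincide with the head positions of $N$. By the preceding lemma and remark we may assume each $\fct_i$ is a partial function into $S_i$. This keeps the communication step deterministic whenever $A$ is deterministic.

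An ordinary step of $A$ becomes one move of $N$. If all entries of the stored tuple are innate states $s_i$ and the heads read $a_1,\dots,a_k$, then $\delta$ offers every combination $(p_1,\dots,p_k)$ with $(p_i,d_i)\in\delta_i(s_i,a_i)$. The head movement vector is obtained by translating $\rightstep$ to $1$ and $\nostep$ to $0$, and a head reading $\eoi$ is forced to $0$, matching Definition~\ref{DefSuccConfigRel}. A communication step becomes a stationary move of all heads, with all head movements $0$. It replaces the tuple by $(p_1,\dots,p_k)$ exactly as in Definition~\ref{DefSuccConfigRel}: for $s_i=(t_i,q_j)$ we set $p_i=\fct_i(t_i,\mathrm{inn}(s_j))$, where $\mathrm{inn}$ extracts the innate part. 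In returning mode we additionally reset every queried $p_j$ to $s_{0,j}$, following Definition~\ref{DefnfRPCFA}.

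Centralization needs no special treatment, since it only restricts which transitions exist in $A$. Determinism transfers directly. The product of deterministic $\delta_i$ yields $|\delta(s,a_1,\dots,a_k)|\le 1$, and the communication move is a function of the tuple alone. Hence part~2 follows from the same construction.

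The main obstacle is the acceptance condition. $A$ accepts when it halts, that is, when some $\delta_i$ or $\fct_i$ is undefined, and some component $A_i$ is in $p_i\in F_i$ with $\delta_i(p_i,a_i)=\emptyset$. A $k$-NFA, by contrast, accepts by halting in an accepting state. We therefore define $\delta(s,a_1,\dots,a_k)=\emptyset$ exactly when the simulated successor of $A$ is undefined. For an innate tuple this means some $\delta_i(s_i,a_i)=\emptyset$; for a query tuple it means some required $\fct_i$ value is undefined. We then declare a tuple accepting if it is innate and there is an $i$ with $s_i\in F_i$ and $\delta_i(s_i,a_i)=\emptyset$.

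This last condition depends on the symbol read by head $i$, which is not part of the state. Two fixes are available. One is to let $N$ read the symbols into its state: while simulating, every move stores the currently scanned tuple of symbols, extended by one stationary "peek" step after each move. The other is to declare the tuple $(s_1,\dots,s_k)$ accepting together with the scanned symbols via a transition to a fresh absorbing accepting state $f$ with $\delta(f,\cdot)=\emptyset$. Concretely, whenever the tuple is innate and some $i$ satisfies $s_i\in F_i$ and $\delta_i(s_i,a_i)=\emptyset$, $N$ has exactly this single transition to $f$ with all heads stationary; in every other halting situation $N$ halts in a non-accepting state. This preserves determinism, because in that situation $A$ has no successor, so no competing transition exists. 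Finally, we check by induction on the computation length that configurations of $A$ and $N$ correspond step by step, which gives $L(N)=L(A)$.
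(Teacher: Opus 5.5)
Your simulation follows the paper's own sketch. The control of the $k$-head automaton stores the tuple of component states and head $i$ tracks $A_i$. Ordinary steps are the product of the $\delta_i$. Communication steps are stationary moves applying the $\kappa_i$, plus the reset in returning mode, and determinism is inherited. You go beyond the paper by handling acceptance explicitly, but the rule you state there fails in one case, so the claim $L(N)=L(A)$ is false as written.

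You allow the jump to $f$ only from an innate tuple. The paper's definition of $L(A)$ accepts as soon as some reachable configuration has a component with $p_i\in F_i$ and $\delta_i(p_i,a_i)=\emptyset$, and other components may be in query states at that moment. For a concrete counterexample, let $\delta_1(s_{0,1},a)=\{((s_1,q_2),d)\}$ with $\kappa_1(s_1,\cdot)$ undefined. Let $\delta_2(s_{0,2},a)=\{(f_2,d)\}$ with $f_2\in F_2$ and $\delta_2(f_2,\cdot)=\emptyset$. On input $a$, after one ordinary step $A_2$ is in $f_2$ while $A_1$ is in $(s_1,q_2)$. The communication step is undefined, so $A$ halts there and $a\in L(A)$. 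Your $N$, however, sees a non-innate tuple, has no defined communication move, and halts in a non-accepting state. Under the paper's literal definition, the same mismatch also occurs in returning mode when the accepting component is itself queried and reset to its initial state.

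The repair is short: test the acceptance condition on every simulated tuple, allowing query entries in the other coordinates, and jump to $f$ whenever it holds. In the deterministic case this transition must take priority over the communication move. That is harmless because membership of the input is already settled by the definition of $L(A)$. Your justification that ``no competing transition exists'' has to be replaced by this priority argument.

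A smaller point concerns part 2. Single-valuedness of $\kappa_i$ should come from the convention in the Remark, as the paper's sketch assumes. It should not be obtained by applying the Lemma, because its sets $D_{p,a,i,3}$ split one transition into several and would destroy determinism.
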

	\begin{proof}[Sketch]
		A $k$-head automaton can simulate any variation of nfPCFA($k$). The construction is similar to the one given in \cite{martin-videparallel2002} for forgetting PCFA and will therefore not be repeated in detail.\\
		The information regarding the state of all components~$A_i$, $\ik$, is stored in the state of the \mbox{$k$-head}~automaton. Each head simulates the movements of one component. To determine the successor state, the~$k$-FA simulates the transition and the aggregation function through its own transition function. % The only difference being, that the following state in a communicating step is calculated through $\fct_i$ instead of being copied from the queried component.
		In the deterministic case, this also holds true and determinism is retained since the transition function and the aggregation function for deterministic nfPCFA are unambiguous.	
	\end{proof}
	\begin{theorem} \label{Th-kDFA-in-nfDRCPCFAk}
		For all $k\geq 1$, the family $\Lclassk{nfDRCPCFA}{k}$ includes $\LclasskFA{D}{k}$.
	\end{theorem}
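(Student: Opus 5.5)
The idea is to let the querying component $A_1$ carry the entire finite control of the given $k$-DFA together with its first head. Each further component $A_i$, $2\leq i\leq k$, acts as a passive head. It advances exactly one cell each time it is queried, and it reports the symbol under the simulated head. Returning mode then works in our favour: being reset to $s_{0,i}$ after a query is exactly the signal that makes $A_i$ advance. Since only $A_1$ ever queries, the system is centralized. Because nfPCFA are non-forgetting, $A_1$ keeps its own state across every communication step. We fix a $k$-DFA $N=(S,\Sigma,k,\delta,\eoi,s_0,F)$ and construct an nfDRCPCFA($k$) $A$ with $L(A)=L(N)$.

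\textbf{Passive components.} For $2\leq i\leq k$ we let $S_i=\{s_{0,i}\}\cup\{s_a\mid a\in\Sigma\cup\{\eoi\}\}$, where the $s_a$ are fresh states. The transitions are
\begin{align*}
\delta_i(s_{0,i},a)&=(s_a,\rightstep) \text{ for } a\in\Sigma\cup\{\eoi\}, &
\delta_i(s_a,x)&=(s_a,\nostep) \text{ for all } x.
\end{align*}
A move on $\eoi$ is stationary anyway. Thus $A_i$ reads one symbol, stores it, steps past it, and then waits. When $A_1$ queries $A_i$, it receives $s_a$, and $A_i$ is reset to $s_{0,i}$. In the next ordinary step $A_i$ reads the following symbol $b$ and enters $s_b$. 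This is the same ordinary step in which $A_1$ may already enter its next query state, so the next query returns $s_b$ with no extra delay. Every $\delta_i$ is total, so $A_i$ never causes halting. The invariant we aim for is this: the symbol most recently delivered by $A_i$ is the symbol under head $i$ of $N$, while $A_i$ itself sits one cell further right.

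\textbf{Querying component.} The states of $A_1$ are tuples $(s,c_2,\dots,c_k,\pi)$. Here $s\in S$ is the current state of $N$, and $c_i$ is the symbol currently under simulated head $i$. The control part $\pi$ is a finite phase: either ``initialising, next to fetch head $j$'', or ``executing a pending move $(p,d_1,\dots,d_k)$, next head to advance is $j$''.

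In the initialisation phase, $A_1$ stays stationary on $w(1)$ and queries $A_2,\dots,A_k$ one after another. The aggregation function $\fct_1$ writes the received symbol into $c_j$. Consecutive queries need no idle steps, by the timing observed above.

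After initialisation, a simulation round goes as follows. $A_1$ reads its own symbol $a_1$ and computes $\delta(s,a_1,c_2,\dots,c_k)=(p,d_1,\dots,d_k)$. It stores this in $\pi$ and then, for each $i\geq 2$ with $d_i=1$ in turn, enters a query state $(\cdot,q_i)$ with a stationary move. The aggregation function updates $c_i$ with the received symbol. In the last ordinary step of the round, $A_1$ moves its own head according to $d_1$ and switches to state $p$. It can move its own head in the same step in which it issues the final query, so rounds need not waste steps, although wasting steps would also be harmless. If $d_i=1$ while $c_i=\eoi$, the query is skipped, because the head stays on $\eoi$ anyway.

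Acceptance works as follows. When $\delta(s,\cdot)$ is undefined, $A_1$ enters an accepting state with undefined transitions if $s\in F$, and otherwise a non-accepting dead state. Since all $A_i$ with $i\ge 2$ are total, $A$ halts exactly when $N$ halts, and the halting component $A_1$ is accepting exactly when $N$ accepts. Determinism is clear, because $\delta_1$, all $\delta_i$ and $\fct_1$ are single-valued.

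\textbf{Main obstacle.} The delicate step is the timing invariant under returning mode. We must show that after each query of $A_i$, the next symbol reported by $A_i$ is exactly the next tape cell. In particular, the reset in the communication step followed by the read in the subsequent ordinary step must never skip a cell or report one twice. This includes the start, where the first query must return $w(1)$, and the end, where $A_i$ sits on $\eoi$. I would prove this by induction on the number of queries to $A_i$, tracking the configuration pair (ordinary step, communication step). Once the invariant holds, a straightforward induction on the rounds shows that after each round $A_1$'s state encodes $N$'s configuration, which gives $L(A)=L(N)$.
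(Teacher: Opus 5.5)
Your proposal is correct and is essentially the paper's construction. $A_1$ carries the $k$-DFA's state and the symbols under the simulated heads, each passive component reads and stores one new symbol after every reset caused by a query, and an initial round of queries primes the passive components so that every later query returns the next cell. The remaining differences are cosmetic: you let $A_1$ read head~1's symbol directly from the tape instead of storing $\sigma_1$ one cell behind, you skip queries on the end marker, and you use explicit accept/dead states rather than encoding acceptance in $F_1$.
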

	\begin{proof}
		Let $D=(S,\Sigma, k, \delta, \eoi, s_0, F)$ be a $k$-DFA. Construct an nfDRCPCFA($k$)~$A=(\Sigma, A_1, A_2, \dots, A_k, Q, \eoi)$. Note that~$A$ is centralized and working in returning mode. The only component that is capable of querying is $A_1 =(S_1,  \Sigma, \delta_1, \fct_1, s_{0,1}, F_1)$ with~$S_1 = \{s_{0,1}\} \cup (S \times \{0,1\}^{k-1} \times (\Sigma \cup \{\eoi\})^k)$ and~$F_1= F \times \{0\}^{k-1} \times (\Sigma \cup \{\eoi\})^k$. The component~$A_1$ will be referred to as the \textit{control component} in this context since it simulates the state control~$\delta$ of~$D$ in addition to simulating the movements of the first head of~$D$. Aside from the initial state~$s_{0,1}$ the states of~$A_1$ have the form $(s,d_2,d_3,\dots,d_k,\sigma_1,\sigma_2,\dots,\sigma_k)$. 
		In these states,~$A_1$ stores the current partial state~$s\in S$ which equals the state of~$D$, the movement~$d_i\in\{0,1\}$, for $2\leq i \leq k$, which represents if the $i$-th head should move (according to~$\delta$), and the input symbols~$\sigma_i\in\Sigma\cup\{\eoi\}$, for $\ik$, which represent the current input of the~$i$-th head. 
		The other components~$A_i=(\{s_{\sigma,i}\}_{\sigma \in \Sigma \cup \{\eoi,0\}}, \Sigma, \delta_i, \fct_i, s_{0,i}, \emptyset)$, for $\nik{2}$, %, which are incapable of querying but take note of being queried since it causes them to return to their initial state, 
		will simulate the remaining $k-1$ heads and will be referred to as \textit{head components} in this context. Aside from their initial state~$s_{0,i}$ the states of each head component~$A_i$ have the form $s_{\sigma,i}$ where $\sigma\in\Sigma\cup \{\eoi\}$ is last read symbol and~$i$ is index of the component~$A_i$. \\
		
		\noindent
		In the following, let $s \in S$, $d_i \in \{0,1\}$, for $\nik{2}$, and $\sigma_i,\sigma, \tau \in \Sigma\cup\{\eoi\}$, for $\ik$.\\
		
		The control component performs an ordinary step according to the transition function~$\delta$ of~$D$ when~$d_i=0$ for all $2 \leq i \leq k$, denoting that no heads need to be moved. The transition function of~$A_1$ uses the partial state $s\in S$ as well as all input symbols~$\sigma_i$, $\ik$, to determine the subsequent state. In this subsequent state, the information on the movements~$d_i$ (which is $0$) and the partial state~$s$ will be replaced according to~$\delta$ by~$d_i' \in \{0,1\}$ and~$t \in S$ respectively, for all $\nik{2}$. If the first head of  $D$ is supposed to move,~$A_1$ will read the current input symbol and store it in its state. Otherwise,~$A_1$ will perform a stationary move without updating the input symbol~$\sigma_1$:
		\begin{itemize}
			\item if $\delta(s,\sigma_1,\sigma_2, \dots,\sigma_k)=(t, 1,d_2',d_3', \dots, d_k')$, i.e. the first head of~$D$ moves, then \\
			$\delta_1((s, 0, 0, \dots, 0, \sigma_1, \sigma_2, \dots,\sigma_k), \sigma) = ((t,d_2', d_3',\dots, d_k', \sigma, \sigma_2, \dots,\sigma_k),\rightstep)$, and
			
			\item if $\delta(s, \sigma_1, \sigma_2, \dots, \sigma_k)=(t, 0,d_2',d_3',\dots, d_k')$, i.e. the first head of~$D$ does not move, then \\
			$\delta_1((s, 0, 0, \dots, 0, \sigma_1, \sigma_2, \dots, \sigma_k), \tau) =((t,d_2', d_3',\dots, d_k', \sigma_1, \sigma_2, \dots, \sigma_k), \nostep)$ for some $\tau \in \Sigma\cup\{\eoi\}$.
		\end{itemize}
		All head components~$A_i$ that need to be moved are now marked by $d_i'=1$. To get the next input symbol from the $i$-th head and also move it by one step %, for some~$2\leq i\leq k$ where $d_i=1$ in the current state of~$A_1$, 
		the control component~$A_1$ queries head component~$A_i$. This query supplies the next input symbol stored in the state of~$A_i$ and causes~$A_i$ to return to its initial state since the system is in returning mode. In its initial state,~$A_i$ reads and stores the next symbol: for~$\nik{2}$,  
		\begin{itemize}
			\item if $\sigma \neq \eoi$, then  $\delta_i(s_{0,i}, \sigma)=(s_{\sigma,i}, \rightstep)$, and
			
			\item if $\sigma = \eoi$, then  $\delta_i(s_{0,i}, \sigma)=(s_{\sigma,i}, \nostep)$.
		\end{itemize}
		Once the symbol is read and stored,~$A_i$ waits until it is queried again: $\delta_i(s_{\sigma,i}, \tau)=(s_{\sigma,i}, \nostep)$.\\
		
		%$\begin{array}{l r l}
			%	\text{For } \nik{2}, 
			%	&\delta_i(s_{0,i}, \sigma)&=(s_{\sigma,i}, \rightstep)\text{ if }\sigma \neq \eoi \\
			%	&\delta_i(s_{0,i}, \sigma)&=(s_{\sigma,i}, \nostep) \text{ if } \sigma = \eoi  \\
			%	&\delta_i(s_{\sigma,i}, \tau)&=(s_{\sigma,i}, \nostep)\\
			%\end{array}$\\[1ex]
			During each round of queries, the control component queries all head components~$A_i$ where \mbox{$d_i=1$} by ascending order to ensure that the transition function is properly defined. After each query, the \fctname \ is used to update the control component's state with the new input symbol~$\sigma_i'$ and to set the information $d_i'$ to $0$, denoting that~$A_i$ does not have to be queried anymore:
			\begin{itemize}
				\item $\delta_1((s,d_2, d_3, \dots, d_k, \sigma_1,\sigma_2,  \dots,\sigma_k), \tau) = (((s,d_2, d_3, \dots, d_k, \sigma_1, \sigma_2, \dots,\sigma_k),q_j), \nostep)$\\ with  $j=min\{\nik{2} \mid d_i = 1\}$, and
				
				\item $\fct_1((s,d_2,d_3,\dots,d_k, \sigma_1,\sigma_2,\dots,\sigma_k),s_{\sigma,j}) =  (s,d_2',d_3',\dots,d_k', \sigma_1',\sigma_2',\dots,\sigma_k')$\\
				with  $d_j'=0$ and $\sigma_j' = \sigma$, and, for all $i\neq j$, $d_i'=d_i$ and $\sigma_i' = \sigma_i$ .
			\end{itemize}
			%\text{ or } \sigma_i = \eoi 

			At the beginning of a computation, all components have to be brought into the right starting position. To achieve this,~$A_1$ reads the first symbol~$\sigma$ of the input word and stores it $k$-times in its state to denote that all $k$~heads have read this symbol. Further,~$d_i$ is set to $1$ for all $2\leq i \leq k$, denoting that all head components need to be queried: $\delta_1(s_{0,1}, \sigma)=((s_0,1,1, \dots,1, \sigma,\sigma, \dots,\sigma),\rightstep)$. This is necessary to bring the head components in the right positions. Otherwise, they would return the first symbol again when queried for the next symbol. At the end of this initial round of queries, $A_1$ is in state $(s_0,0,0, \dots,0, \sigma,\sigma, \dots,\sigma)$ with $s_0\in S$ being the initial state of $k$-DFA $D$. The control component is now ready to take the first step according to~$\delta$ as explained above.\\
			
			The set of accepting states of all head components is empty to ensure that the automata system only accepts when an accepting state according to~$D$ has been reached. Since $F_1 = F \times \{0\}^{k-1} \times (\Sigma \cup \{\eoi\})^k$, component~$A_1$ never accepts during a round of queries, i.e. while $d_i=1$ for some $2\leq i\leq k$. Further, as long as $d_i=1$ for some $2\leq i\leq k$, the transition function~$\delta_1$ is always defined. With that, the automata system only halts when there is no subsequent state according to~$\delta$. Additionally, the system only accepts % if the current state is in~$F_1$ which is only the case
			if its partial state $s\in S$ is in~$F$, i.e. is an accepting state of~$D$. 
			
			Since the constructed nfDRCPCFA($k$)~$A$ moves according to the transition function~$\delta$ of the given $k$-DFA~$D$ and only accepts when~$D$ accepts, the accepted languages are equal.
			
	\end{proof}
	%\newpage
	\begin{corollary} \label{Cor-nfDRCPCFAk=kDFA}
		For all $k\geq 1$, the families $\Lclassk{nfDRCPCFA}{k}$ and $\LclasskFA{D}{k}$ are equal.
	\end{corollary}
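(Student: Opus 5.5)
The corollary follows by combining the two inclusions already established. Fix $k\geq 1$; we show $\Lclassk{nfDRCPCFA}{k} \subseteq \LclasskFA{D}{k}$ and $\LclasskFA{D}{k} \subseteq \Lclassk{nfDRCPCFA}{k}$.

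For the first inclusion, apply part 2 of Theorem~\ref{Th-nfPCFAkinkFA} with $X$ chosen as the returning centralized variant. The variant nfDRCPCFA is deterministic, centralized and returning, so it is one of the variants covered by that theorem. The theorem then gives $\Lclassk{nfDRCPCFA}{k}\subseteq\LclasskFA{D}{k}$ directly. The only check needed is that this $X$ really lies in the range of variants quantified over in Theorem~\ref{Th-nfPCFAkinkFA}. This holds because the $k$-head simulation stores all $k$ component states in its finite control, so it works uniformly for every mode. In returning mode it just resets the queried components to their initial states during the simulated communication step.

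For the reverse inclusion, Theorem~\ref{Th-kDFA-in-nfDRCPCFAk} states exactly $\LclasskFA{D}{k}\subseteq\Lclassk{nfDRCPCFA}{k}$, so nothing further is needed.

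Combining the two inclusions yields equality for every $k\geq 1$. There is no real obstacle here; the only point requiring care is matching the variant notation in Theorem~\ref{Th-nfPCFAkinkFA} to the deterministic centralized returning mode.
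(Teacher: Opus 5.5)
Your proof is correct and is the same as the paper's: the inclusion of the nfDRCPCFA($k$) family in the $k$-DFA family comes from part 2 of Theorem~\ref{Th-nfPCFAkinkFA} with $X$ the centralized returning variant, and the reverse inclusion is exactly Theorem~\ref{Th-kDFA-in-nfDRCPCFAk}. Your remark that the $k$-head simulation handles returning mode by resetting the queried components is a sensible gloss on the paper's sketch, but the corollary itself needs nothing beyond citing the two theorems.
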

	\begin{proof}
		It has already been established in Theorem~\ref{Th-nfPCFAkinkFA} that $\Lclassk{nfDRCPCFA}{k}$ is included in $\LclasskFA{D}{k}$. The equality follows with Theorem~\ref{Th-kDFA-in-nfDRCPCFAk}.
	\end{proof}
	Following the same principle as in the deterministic case described in Theorem~\ref{Th-kDFA-in-nfDRCPCFAk}, a $k$-NFA can be simulated by an nfRCPCFA($k$). The equality follows with Theorem~\ref{Th-nfPCFAkinkFA}.
	\begin{theorem} \label{Th-kNFA-in-nfRCPCFAk}
		For all $k\geq 1$, the families $\Lclassk{nfRCPCFA}{k}$ and $\LclasskFA{N}{k}$ are equal.
	\end{theorem}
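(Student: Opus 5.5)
The plan is to prove the two inclusions separately. The inclusion $\Lclassk{nfRCPCFA}{k} \subseteq \LclasskFA{N}{k}$ is already given by Theorem~\ref{Th-nfPCFAkinkFA}, part~1, with $X = RCPCFA$. So the only work is the converse inclusion $\LclasskFA{N}{k} \subseteq \Lclassk{nfRCPCFA}{k}$. For it I would reuse the construction from the proof of Theorem~\ref{Th-kDFA-in-nfDRCPCFAk} essentially verbatim, now starting from a $k$-NFA $N=(S,\Sigma,k,\delta,\eoi,s_0,F)$.

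The construction keeps the control component $A_1$ with states $s_{0,1}$ and $(s,d_2,\dots,d_k,\sigma_1,\dots,\sigma_k)$, and the head components $A_i$ with states $s_{0,i}$ and $s_{\sigma,i}$. Their transition functions are unchanged, except for the ordinary step of $A_1$ when all $d_i=0$. There I set
\[\delta_1((s,0,\dots,0,\sigma_1,\dots,\sigma_k),\sigma)\]
to be the union, over all $(t,d_1',d_2',\dots,d_k')\in\delta(s,\sigma_1,\dots,\sigma_k)$, of the single transition prescribed in the deterministic proof. If $d_1'=1$, this is $((t,d_2',\dots,d_k',\sigma,\sigma_2,\dots,\sigma_k),\rightstep)$. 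If $d_1'=0$, it is $((t,d_2',\dots,d_k',\sigma_1,\dots,\sigma_k),\nostep)$. The initialization round, the query rounds (query the smallest index $j$ with $d_j=1$, update via $\fct_1$), the head components and the accepting set $F_1=F\times\{0\}^{k-1}\times(\Sigma\cup\{\eoi\})^k$ stay deterministic and unchanged. In particular each $\fct_i$ still returns at most one state, consistent with the earlier lemma and remark.

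For correctness I would establish a simulation invariant by induction on the number of steps of $N$. Every computation of $N$ that reaches configuration $(s,\text{head positions})$ corresponds to a computation of $A$ that reaches a configuration with $A_1$ in $(s,0,\dots,0,\sigma_1,\dots,\sigma_k)$. In that configuration each $\sigma_i$ is the symbol under head $i$ of $N$, and each head component has read exactly as far as head $i$ and is waiting in $s_{\sigma_i,i}$. Conversely, every computation of $A$ reaching such a ``synchronized'' configuration arises in this way. The inductive step rests on three facts: a round of queries is deterministic and always defined; querying in returning mode moves $A_i$ exactly one symbol (or leaves it on $\eoi$); and nondeterministic choices occur only at synchronized configurations, where they are in bijection with $\delta$.

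The main obstacle I expect is the acceptance condition combined with nondeterminism. $A$ accepts only when a component halts in an accepting state, i.e.\ $A_1$ is in $F_1$ and $\delta_1$ is undefined there. Since $\delta_1$ at a synchronized state is empty exactly when $\delta(s,\sigma_1,\dots,\sigma_k)=\emptyset$, halting of $A$ coincides with halting of $N$. Two further points need checking. First, head components never halt the system prematurely: their $\delta_i$ is total, so this holds. Second, no component other than $A_1$ accepts, which holds because $F_i=\emptyset$. Given these, $L(A)=L(N)$, and together with Theorem~\ref{Th-nfPCFAkinkFA} the equality follows.
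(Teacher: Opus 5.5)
Your proposal is correct and follows the paper's argument. The paper likewise takes $\Lclassk{nfRCPCFA}{k}\subseteq\LclasskFA{N}{k}$ from Theorem~\ref{Th-nfPCFAkinkFA}, and it obtains the converse by reusing the construction of Theorem~\ref{Th-kDFA-in-nfDRCPCFAk}, making only the control component's simulation step nondeterministic.

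Your write-up is more explicit than the paper's one-line justification. The points you check are the right ones: the head components' transitions are total, $F_i=\emptyset$ for $i\geq 2$, and $\delta_1$ is empty at synchronized states exactly when $\delta$ is.
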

	\begin{theorem} \label{Th-kFA-in-nfCPCFAk}
		For all $k\geq 1$, the family $\Lclassk{nfCPCFA}{k}$ includes $\LclasskFA{N}{k}$.
	\end{theorem}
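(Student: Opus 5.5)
The plan is to simulate a given $k$-NFA $N=(S,\Sigma,k,\delta,\eoi,s_0,F)$ by an nfCPCFA($k$) $A$ whose querying component $A_1$ is a control component, and whose components $A_2,\dots,A_k$ are head components, as in Theorem~\ref{Th-kDFA-in-nfDRCPCFAk}. The obstacle compared with that theorem is the non-returning mode. A head component is no longer reset when it is queried, so it cannot learn from a query that it should advance. I will replace the returning signal by nondeterministic guessing on the side of the head components, followed by verification on the side of the control component. Runs whose guesses are wrong are made to halt in a non-accepting configuration. Since $L(A)$ only needs \emph{some} accepting computation, this does not change the accepted language. Together with Theorem~\ref{Th-nfPCFAkinkFA} this then also gives equality.

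\textbf{Head components.} For $2\leq i\leq k$, the states of $A_i$ are $s_{0,i}$ and pairs $(\sigma,c)$, where $\sigma\in\Sigma\cup\{\eoi\}$ is the last symbol read and $c\in\{0,\dots,k\}$ counts modulo $k+1$ how many right moves $A_i$ has made. In every ordinary step, $A_i$ nondeterministically either performs a stationary move and keeps its state, or reads the current symbol $\tau$, moves right, and enters $(\tau,c+1 \bmod (k+1))$. On $\eoi$ it can only stay, exactly as in Theorem~\ref{Th-kDFA-in-nfDRCPCFAk}. An initial move from $s_{0,i}$ reads the first symbol, which aligns the heads as in the initialization round there. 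The sets of accepting states of all head components are empty.

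\textbf{Control component.} $A_1$ stores a state $s\in S$ of $N$, the current symbols $\sigma_1,\dots,\sigma_k$, a pending movement vector $(d_2,\dots,d_k)$, and the last observed counter value $c_i$ of each head component. A simulated step of $N$ works as follows.
\begin{enumerate}
\item $A_1$ nondeterministically chooses $(t,d_1,\dots,d_k)\in\delta(s,\sigma_1,\dots,\sigma_k)$ and moves its own head according to $d_1$.
\item In the following ordinary steps, which are forced stationary moves of $A_1$ of the form $(\,\cdot\,,q_j)$, it queries $A_2,\dots,A_k$ in ascending order, each exactly once.
\item Upon receiving $(\sigma,c)$ from $A_j$, the aggregation function $\fct_1$ checks that $c\equiv c_j+d_j \pmod{k+1}$. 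If the check succeeds, it updates $\sigma_j:=\sigma$ and $c_j:=c$. If it fails, $\fct_1$ is undefined, so the system halts while $A_1$ is in a non-accepting intermediate state.
\end{enumerate}
Between two consecutive queries of the same $A_j$ there are at most $k$ ordinary steps. The number of moves of $A_j$ in that interval is therefore at most $k$, and the counter modulo $k+1$ determines it exactly. Consequently a successful check certifies that $A_j$ moved exactly $d_j\in\{0,1\}$ times since its last query. This in turn certifies that the received $\sigma$ is exactly the symbol under the simulated $j$-th head of $N$.

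\textbf{Acceptance and the main step.} As in Theorem~\ref{Th-kDFA-in-nfDRCPCFAk}, $F_1$ consists only of states outside a query round whose $S$-component lies in $F$, and $\delta_1$ is total during query rounds. Hence $A$ halts in an accepting state exactly when the simulated run of $N$ does. For the inclusion $L(N)\subseteq L(A)$, an accepting run of $N$ is matched by the computation in which each $A_j$ guesses its single move at a time that falls in the right window. For $L(A)\subseteq L(N)$, every computation passing all checks projects to a valid run of $N$. The step I expect to be the main obstacle is this last argument. I must make the timing precise: the exact number of ordinary steps per round, including the one where $A_1$ itself moves, must stay below $k+1$. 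I must also treat the effect of moves by $A_j$ made while it is at $\eoi$ (these are stationary and do not increase $c$), and the initialization. If the bound turns out to be off by one, I will simply increase the modulus.
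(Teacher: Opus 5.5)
Your construction is correct in its overall design: a control component $A_1$, head components that guess their moves, every head queried exactly once per round, and failed checks making $\kappa_1$ undefined. It differs from the paper in how the guesses are verified.

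In the paper, all components carry a counter that cycles in lock-step with $A_1$'s round. A head component may move only in the one ordinary step in which $A_1$ applies $\delta$, it records that single guess in a bit $m_i$, and $\kappa_1$ just compares $m_j$ with $d_j$.

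You let $A_j$ move in any ordinary step and certify the number of moves by a residue modulo $k+1$. The key fact is that two consecutive queries of $A_j$ are separated by exactly $k$ ordinary steps. These are the $k-j$ remaining query-entering steps of one round, the $\delta$-step, and the $j-1$ query-entering steps of the next round. The last of these counts because $A_j$ moves in it too. Before the first query there are at most $k$ such steps, since the initial move is forced and $A_1$ knows its outcome.

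The paper's scheme makes correctness nearly trivial (one guess per round, one bit), but it requires the heads' clock to match $A_1$'s rhythm exactly, and that is delicate. As written, $\kappa_1$ expects $(j,\sigma,m_j)$ from $A_j$. But $A_j$ has already advanced its counter to $(j \bmod k)+1$ in the ordinary step in which $A_1$ entered $(\cdot,q_j)$. Your scheme needs no synchronization, at the price of the window-counting argument. Your fallback of enlarging the modulus protects it against off-by-one errors of exactly this kind.

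The one point you must change is the rule at $\eoi$. Because of the one-cell prefetch, $A_j$ already stands on $\eoi$ while the simulated $j$-th head reads the last input symbol. So the simulated step onto $\eoi$ can only be realized by a stationary step of $A_j$ that stores $\eoi$. If such a step does not increase $c$, as you write, two things go wrong:
\begin{itemize}
\item The check $c\equiv c_j+1$ fails for this step, so runs of $N$ in which some head $j\geq 2$ steps onto $\eoi$ are lost.
\item Worse, a real move that reads the last input symbol, followed in the same window by an uncounted read of $\eoi$, yields $(\eoi,c_j+1)$. This passes the check for $d_j=1$, although the simulated head reads the last input symbol, not $\eoi$. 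That breaks soundness.
\end{itemize}
The fix is to let $c$ count simulated moves rather than physical ones. Use the single option ``read $\tau$, enter $(\tau,c+1)$, move right'' for every $\tau$. The successor relation turns this into a stationary move when $\tau=\eoi$, which is exactly how the paper's transition to $(2,\tau,1)$ behaves. A window still contains at most $k$ such steps, so the residue again determines their number exactly, and the rest of your argument goes through.
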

	\begin{proof}
		Let $N=(S,\Sigma, k, \delta, \eoi, s_0, F)$ be a $k$-NFA. We will take a similar approach as in the proof of Theorem~\ref{Th-kDFA-in-nfDRCPCFAk} for this construction. Construct an nfCPCFA($k$) $A=(\Sigma, A_1, A_2, \dots, A_k, Q, \eoi)$. Note that~$A$ is centralized and working in non-returning mode. The only component that is capable of querying others is \mbox{$A_1 =(S_1,  \Sigma, \delta_1, \fct_1, s_{0,1}, F_1)$} with $S_1 = \{s_{0,1}\} \cup (\{1,2,\dots,k\} \times S \times \{0,1\}^{k-1} \times (\Sigma \cup \{\eoi\})^k)$ and~$F_1 = \{1\} \times F \times \{0,1\}^{k-1} \times (\Sigma \cup \{\eoi\})^k$. Component~$A_1$ will again be referred to as the \textit{control component} in this context. The components $A_i=(\{s_{0,i}\}\cup \{1,2,\dots,k\}\times (\Sigma \cup \{\eoi\}) \times \{0,1\}, \Sigma, \delta_i, \fct_i, s_{0,i}, \emptyset)$, for $\nik{2}$, 
		will simulate the remaining~\mbox{$k-1$} heads and will be referred to as \textit{head components} in this context.
		Aside from the initial state~$s_{0,1}\in S_1$, the states of control component~$A_1$ are of the form~$(j,s,d_2,d_3,\dots,d_k,\sigma_1,\sigma_2,\dots,\sigma_k)$ where~$j\in \{1,2,\dots,k\}$ is a counter,~$s\in S$ is the partial state from~$N$, ~\mbox{$d_i\in\{0,1\}$}, for~$\nik{2}$, is the information whether head component~$A_i$ needs to move according to~$\delta$, and, for $\ik$, the symbols $\sigma_i\in\Sigma\cup\{\eoi\}$ represent the current input form all $k$ heads. 
		For all head components, the states aside from the initial state~$s_{0,i}\in S_i$ have the form $(j,\sigma,d)$ where~$j\in\{1,2,\dots,k\}$ is a counter, $\sigma\in\Sigma\cup\{\eoi\}$ is the last read symbol, and~$d\in\{0,1\}$ is the information whether the component has moved or not (denoted by $1$ and $0$ respectively). The counters count up from~$1$ to~$k$ in unison for all~$k$~components. They are used to keep up the shared rhythm and also dictate which head component the component~$A_1$ has to query. \\
		
		\noindent
		In the following, let  $s \in S$ and $m_i, d_i, d_i' \in \{0,1\}$, $ \sigma, \tau, \sigma_i \in \Sigma\cup\{\eoi\}$, for $\ik$.\\
		
		When the counter is~$1$, control component~$A_1$ performs a step according to the transition function~$\delta$ of $k$-NFA~$N$. As before, the partial state~$s\in S$ and the information~$d_i\in\{0,1\}$, for $2\leq i\leq k$, on the movement of~$A_i$ are updated in the state of~$A_1$. Further, the counter is incremented, and if the first head has to move, $\sigma_1$ is updated with the current input:
		\begin{itemize}	
			\item If $(t, 1,d_2',d_3', \dots, d_k') \in \delta(s,\sigma_1,\sigma_2, \dots,\sigma_k)$, i.e. the first head of~$N$ moves, then \\
			$((2,t,d_2', d_3',\dots, d_k', \sigma, \sigma_2, \dots,\sigma_k), \rightstep) 
			\in \delta_1((1,s,d_2, d_3,\dots, d_k, \sigma_1, \sigma_2, \dots,\sigma_k), \sigma)$, and
			
			\item if $(t, 0,d_2',d_3',\dots, d_k') \in \delta(s, \sigma_1, \sigma_2, \dots, \sigma_k)$, i.e. the first head of~$N$ does not move, then  \\
			$((2,t,d_2', d_3',\dots, d_k', \sigma_1, \sigma_2, \dots, \sigma_k), \nostep) 
			\in \delta_1((1,s,d_2, d_3,\dots, d_k, \sigma_1, \sigma_2, \dots, \sigma_k), \tau)$.
		\end{itemize}
		
		At the same time, i.e. when the counter is at $1$, all head components guess whether they have to take a step. If a step is taken, the next input symbol is stored in the state. Further, all head components store in their state whether they have taken a step or not (denoted by $1$ and $0$ respectively) and increment the counter:
		%For all of them, their state is updated with the information on whether a step was taken, and if the former is true, the next input symbol is stored: 
		for all $\nik{2}$, $\delta_i((1,\sigma, m_i), \tau)=\{((2,\tau, 1),\rightstep),((2,\sigma, 0),\nostep)\}$. \\
		
		Next, it needs to be verified that all head components have moved correctly. To achieve this, the control component~$A_1$ queries all $k-1$ head components and uses the \fctname \ to compare the information~$m_j$ from the received state of component~$A_j$ with the information~$d_j$ in its own state. The counter dictates which component is queried. If the $j$-th component moved correctly, i.e. if $d_j=m_j$, control component~$A_1$ updates the input symbol~$\sigma_j$ with the symbol it received from~$A_j$ and increments the counter. Otherwise, that is if $d_j\neq m_j$, the computation halts without accepting.
		For all $\leqs{2}{j}{k}$,
		\begin{itemize}
			\item $\delta_1((j,s,d_2, d_3,\dots, d_k, \sigma_1, \sigma_2, \dots, \sigma_k), \tau) = \{(((j,s,d_2, d_3,\dots, d_k, \sigma_1, \sigma_2, \dots, \sigma_k),q_j),\nostep)\}$, 
			\item if $d_j = m_j$, i.e. component~$A_j$ moved correctly according to~$\delta$ of~$N$, then\\
			$\fct_1((j,s,d_2, d_3,\dots, d_k, \sigma_1, \sigma_2, \dots,\sigma_k), (j,\sigma ,m_j))  = ((j \text{ mod } k) + 1,s,d_2, d_3,\dots, d_k, \sigma_1', \sigma_2', \dots,  \sigma_k')$ \\ with $\sigma_j'=\sigma$ and, for $i\neq j$, $\sigma_i'=\sigma_i$, and
			\item otherwise, i.e. if $d_j \neq m_j$, %i.e. component~$A_j$ did not move correctly, then $\fct_1((j,s,d_2, d_3,\dots, d_k, \sigma_1, \sigma_2, \dots, \sigma_k), (j,\sigma ,m_j)) = \emptyset$.
		\end{itemize}
		
		During each round of queries, that is when the counter is somewhere between~$2$ and~$k$, the head components wait by performing stationary moves. The components do not store the next input symbol but increment the counter in each step until $k$ is reached. Then, the counter is reset to $1$: for all  $2\leq i, j \leq k$, $\delta_i((j,\sigma, m_i), \tau)=\{(((j  \text{ mod }  k) + 1,\sigma, m_i),\nostep)\}$.\\ %and $\delta_i((k,\sigma, m_i), \tau)= \{((1,\sigma, m_i),\nostep)\}$. \\	

		As before, all components need to be brought into the correct position at the beginning of any computation.  In this case, no queries are needed. The control component reads the first symbol, stores it in its state $k$-times and sets all~$d_i$ to $1$. All head components take a step and are now ready to read the next symbol: $\delta_1(s_{0,1}, \sigma)=\{((1,s_0,1,1, \dots,1, \sigma,\sigma, \dots,\sigma),\rightstep)\}$ 
		and, for $\nik{2}$, $\delta_i(s_{0,i}, \sigma)=\{((1,\sigma, 1),\rightstep)\}$.
		
		Since the counter is at $1$, the control component will take the first step according to $\delta$ next, while the head components will guess whether to take a step as described above.\\
		
		For the head components~$A_i$, $\nik{2}$, the subsequent state is always defined and the set of accepting states is empty. Therefore, the system only halts when~$A_1$ halts. This is only the case if a component moved incorrectly or if the transition function~$\delta$ of the original~$k$-NFA~$N$ is undefined. In the first case, the component is never in an accepting state and there is an alternative computation of the word where the component in question moved correctly. 
		Since the set of accepting states is defined as $F_1 = \{1\} \times F \times \{0,1\}^{k-1} \times (\Sigma \cup \{\eoi\})^k$,~$A_1$ only accepts when the counter is $1$, i.e. when a step according to the transition function~$\delta$ of~$k$-NFA~$N$ has to be performed. With that, the system~$A$ only accepts when~$k$-NFA~$N$ accepts and the accepted languages of~$A$ and~$N$ are equal.
	\end{proof}
	\begin{corollary}\label{Cor-nfCPCFAk=kNFA}
		For all $k\geq 1$, the families $\Lclassk{nfCPCFA}{k}$ and $\LclasskFA{N}{k}$ are equal.
	\end{corollary}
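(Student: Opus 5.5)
The statement is the combination of two inclusions that are already available, so I will prove equality by establishing each direction separately from earlier results.

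For the inclusion $\Lclassk{nfCPCFA}{k} \subseteq \LclasskFA{N}{k}$, I will invoke Theorem~\ref{Th-nfPCFAkinkFA}, part~1, with $X = CPCFA$. That theorem states that every nondeterministic variant nf$X$($k$) is simulated by a $k$-NFA. In the simulation, the $k$-NFA keeps the tuple of all component states in its finite control and lets head~$i$ mimic component~$A_i$. A communication step does not move any head, so it is realized by a single transition in which all heads stay stationary. That transition evaluates the aggregation functions~$\fct_i$ on the stored states. Centralization only restricts which components may enter query states, so the general construction covers this case directly.

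For the reverse inclusion $\LclasskFA{N}{k} \subseteq \Lclassk{nfCPCFA}{k}$, I will invoke Theorem~\ref{Th-kFA-in-nfCPCFAk}. It gives, for every $k$-NFA~$N$, an nfCPCFA($k$) in non-returning mode that accepts exactly $L(N)$. That construction works as follows:
\begin{itemize}
\item the control component~$A_1$ simulates~$\delta$;
\item the head components nondeterministically guess their moves while counters cycle from $1$ to $k$;
\item $A_1$ verifies each guess through the aggregation function~$\fct_1$ in a round of queries.
\end{itemize}

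Combining the two inclusions gives equality for every $k\geq 1$. For $k=1$ both families coincide with the regular languages, and both theorems already apply in that case, so no separate argument is needed.

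The only point needing care is whether Theorem~\ref{Th-nfPCFAkinkFA} genuinely covers the centralized non-returning nondeterministic variant. I will check that CPCFA belongs to the set of variants listed in that theorem; I expect this to hold, since it ranges over all eight variants. If it did not, I would give the simulation directly. Because nfPCFA have no $\lambda$-steps and every query is answered within one communication step, each step of the system maps to exactly one $k$-NFA transition, and this also yields the required inclusion.
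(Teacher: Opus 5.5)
Your proposal is correct and matches the paper's proof: you get $\Lclassk{nfCPCFA}{k}\subseteq\LclasskFA{N}{k}$ from Theorem~\ref{Th-nfPCFAkinkFA}, which covers the centralized non-returning variant. The reverse inclusion comes from the construction in Theorem~\ref{Th-kFA-in-nfCPCFAk}, and the two together give equality.
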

	\begin{proof}
		It has already been established in Theorem~\ref{Th-nfPCFAkinkFA} that $\Lclassk{nfCPCFA}{k}$ is included in $\LclasskFA{N}{k}$. With Theorem~\ref{Th-kFA-in-nfCPCFAk} the equality follows. 
	\end{proof}
	
	\begin{theorem}
		For all $k\geq 1$, the following families are equal:
		\begin{enumerate}
			\item $\Lclassk{nfPCFA}{k}$, $\Lclassk{nfRPCFA}{k}$ and $\LclasskFA{N}{k}$ 
			\item $\Lclassk{nfDPCFA}{k}$, $\Lclassk{nfDRPCFA}{k}$ and $\LclasskFA{D}{k}$.
		\end{enumerate}
	\end{theorem}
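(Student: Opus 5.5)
Both chains of equalities will be obtained by sandwiching each non-centralized family between a centralized family and the corresponding multi-head family. For the upper bound, Theorem~\ref{Th-nfPCFAkinkFA} already gives
\[\Lclassk{nfPCFA}{k},\ \Lclassk{nfRPCFA}{k} \subseteq \LclasskFA{N}{k}
\quad\text{and}\quad
\Lclassk{nfDPCFA}{k},\ \Lclassk{nfDRPCFA}{k} \subseteq \LclasskFA{D}{k}.\]
It therefore remains to show that each multi-head family is contained in the respective non-centralized family.

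For the lower bound, I would observe that a centralized system is a special case of a non-centralized one. By definition, an nfCPCFA($k$) is an nfPCFA($k$) in which only one component ever enters query states. Nothing else in the definitions (transition functions, aggregation functions, the successor relation, acceptance) refers to centralization. Hence every nfCPCFA($k$) is literally an nfPCFA($k$), and similarly in the returning and deterministic variants. This yields the syntactic inclusions
\[\Lclassk{nfCPCFA}{k}\subseteq\Lclassk{nfPCFA}{k},\qquad \Lclassk{nfRCPCFA}{k}\subseteq\Lclassk{nfRPCFA}{k},\]
\[\Lclassk{nfDCPCFA}{k}\subseteq\Lclassk{nfDPCFA}{k},\qquad \Lclassk{nfDRCPCFA}{k}\subseteq\Lclassk{nfDRPCFA}{k}.\]
Combining these with Corollary~\ref{Cor-nfCPCFAk=kNFA} and Theorem~\ref{Th-kNFA-in-nfRCPCFAk} gives $\LclasskFA{N}{k}\subseteq\Lclassk{nfPCFA}{k}$ and $\LclasskFA{N}{k}\subseteq\Lclassk{nfRPCFA}{k}$. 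Corollary~\ref{Cor-nfDRCPCFAk=kDFA} gives $\LclasskFA{D}{k}\subseteq\Lclassk{nfDRPCFA}{k}$.

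The main obstacle is the deterministic non-returning case $\LclasskFA{D}{k}\subseteq\Lclassk{nfDPCFA}{k}$. We cannot route it through nfDCPCFA, since the abstract indicates that centralized deterministic non-returning systems are strictly weaker. Instead I would adapt the construction of Theorem~\ref{Th-kDFA-in-nfDRCPCFAk}, using non-centralization to replace the ``reset on query'' signal that returning mode provided.
\begin{itemize}
\item The control component $A_1$ simulates $\delta$ as before. After each simulated step it publishes, in its own innate state, the movement vector $(d_2,\dots,d_k)$, a round counter $j$ and a parity bit.
\item Each head component $A_i$ carries the same counter. At counter value $1$ it queries $A_1$, and its aggregation function reads $d_i$.
\item If $d_i=1$, the head component then reads and stores the next symbol with a $\rightstep$ move; otherwise it stays with $\nostep$.
\item In the following steps of the round, $A_1$ queries the head components one at a time, as in Theorem~\ref{Th-kFA-in-nfCPCFAk}, to collect the new symbols $\sigma_i$. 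All components keep synchronized counters modulo a fixed round length (about $k+2$) and make stationary moves while waiting.
\end{itemize}
Since every component's behaviour is now determined by information it actually received, all $\delta_i$ and $\fct_i$ stay deterministic. The initial alignment is handled as in the earlier proofs, and acceptance is restricted to $A_1$ at counter value $1$.

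The care needed lies in the timing. One must check that a query always sees the partner's innate state from the intended phase; recall that queried components answer with their innate state even while they are themselves querying. One must also check that components reading $\eoi$ stay stationary, and that no component halts prematurely, which would otherwise cause the system to stop early. With this construction the three families in each item coincide.
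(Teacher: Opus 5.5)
For $\Lclassk{nfPCFA}{k}$, $\Lclassk{nfRPCFA}{k}$ and $\Lclassk{nfDRPCFA}{k}$ your argument is the paper's proof. The upper bounds come from Theorem~\ref{Th-nfPCFAkinkFA}. The lower bounds hold because a centralized system is a special non-centralized one, combined with Corollaries~\ref{Cor-nfCPCFAk=kNFA}, \ref{Cor-nfDRCPCFAk=kDFA} and Theorem~\ref{Th-kNFA-in-nfRCPCFAk}.

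The difference is $\Lclassk{nfDPCFA}{k}$. The paper covers it with the same one-line ``special case'' argument, but that argument does not reach this family. Its only centralized subfamily is $\Lclassk{nfDCPCFA}{k}$, which the paper's own final theorem (via $L_{rcc}$) shows is properly contained in $\LclasskFA{D}{k}$. No earlier result gives $\LclasskFA{D}{k}\subseteq\Lclassk{nfDPCFA}{k}$: Theorem~\ref{Th-PCFAinnfPCFA} covers only centralized systems, and the remark on simulating non-centralized PCFA is not proved. You spotted this correctly.

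Your direct construction fills the gap. Having each head component query $A_1$ for its bit $d_i$ is the right substitute for the returning-mode reset, and it keeps every $\delta_i$ and $\kappa_i$ deterministic. The timing goes through if you fix the following points:
\begin{itemize}
\item Counters advance only in ordinary steps, and every $\kappa_i$ leaves the counter unchanged. Non-querying components are frozen during communication steps, so this keeps all components in lockstep.
\item The head components issue their query to $A_1$ in the same ordinary step in which $A_1$ performs its $\delta$-step. The innate state they receive then already carries the fresh movement vector. With this, the parity bit is unnecessary and rounds of $k+1$ ordinary steps suffice.
\item $F_1$ consists of the counter-$1$ states with $s\in F$. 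There $\delta_1$ is undefined exactly when $\delta$ is, and all other $\delta_i$, $\kappa_i$ are total. Hence acceptance matches that of $D$, as in Theorem~\ref{Th-kDFA-in-nfDRCPCFAk}.
\end{itemize}

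Your route costs one extra construction, but it actually proves item~2 as stated. The paper's argument establishes item~2 only for $\Lclassk{nfDRPCFA}{k}$.
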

	\begin{proof} 
		Since centralized systems  
		are special cases of non-centralized systems,
		the equality follows from previous theorems.
	\end{proof}
\subsection{Deterministic centralized non-returning systems}%\\[1ex]
	In \cite{bordihncomputational2012} it was shown that deterministic centralized systems in non-returning mode of degree $2$, denoted by DCPCFA($2$), are strictly less powerful than their counterpart in returning mode. This was shown by proving that the language $L_{rc}=\{\ u\#c^xv\$u\#v \mid u,v \in \{a,b\}^* \land x\geq 0\ \}$ (over the alphabet $\{a,b,c,\#,\$\}$) cannot be accepted by any DCPCFA of degree $2$. Contrary to the presumption in~\cite{bordihncomputational2012}, this proof cannot be generalized for all $k>2$.
	\pagebreak
	\begin{proposition}
		For $k\geq 3$, $L_{rc}$ belongs to $\Lclassk{DCPCFA}{k}$ (and thus to $\Lclassk{nfDCPCFA}{k}$).
	\end{proposition}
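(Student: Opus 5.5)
The plan is to build an explicit DCPCFA($3$) $A=(A_1,A_2,A_3)$ for $L_{rc}$, with $A_1$ as the only querying component (the master). For $k>3$, extra components simply idle in a $\lambda$-loop and are never queried. Membership in $\Lclassk{nfDCPCFA}{k}$ then follows from Theorem~\ref{Th-PCFAinnfPCFA}.

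The governing observation is this. In the centralized non-returning forgetting model, $A_2$ and $A_3$ are never reset by being queried. They therefore behave as autonomous deterministic one-way automata with $\lambda$-steps, and their positions cannot be controlled by the master. The master, on the other hand, \emph{can} wait: it can stay on a symbol and query the same component repeatedly until the received state signals an event. After a query the master forgets its own state, so every piece of information the master needs afterwards must either be encoded in the state it receives or be recoverable from its own current input symbol. Accordingly, $A_2$ and $A_3$ carry a phase tag together with their last read symbol in their states. The master, after receiving such a state, continues in a state of the form ``phase, symbol of the other copy'' and compares that symbol with its own input symbol.

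The construction will run in three phases.

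\emph{Comparing $u$ with $u$.} $A_2$ reads at speed $1$ until it passes $\$$. From then on it alternates a $\lambda$-step with a reading step, storing the symbol it is currently on together with the tag ``$u$-phase, $2$nd copy''. The master stays on the first symbol of the input and queries $A_2$ in a loop until it receives a state saying that $A_2$ has just entered the second copy. From then on each round consists of one ordinary step and one communication step. In the ordinary step the master reads its own symbol, checks it against the received symbol, and moves right. In the communication step it queries $A_2$ again. Because $A_2$ moves at half speed in this phase, exactly one new symbol of $A_2$ is delivered per round. When the master reads $\#$ and the received symbol is also $\#$, the first comparison is complete.

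\emph{Skipping $c^x$.} The master then skips the block $c^x$ and stops on the first symbol of $v$.

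\emph{Comparing $v$ with $v$.} $A_3$ is deliberately slow, say speed $1/3$, until it reaches the second $\#$. After that it switches to the same half-speed, tagged behaviour as $A_2$, now with the tag ``$v$-phase''. Now let $L=|u\#c^xv\$|$. The master reaches the start of the first $v$ at time about $L+2|u|+1+x$. $A_3$ reaches the start of the second $v$ only at time about $3(L+|u|+1)$, which is strictly later. Hence the master can again wait by repeatedly querying $A_3$, and then compare $v$ with $v$ in the same lockstep manner as before. The master accepts, entering a final state with no successor, exactly when it reads $\$$ at the same moment as the state it received from $A_3$ reports $\eoi$.

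The regular shape $\{a,b\}^*\#c^*\{a,b\}^*\$\{a,b\}^*\#\{a,b\}^*$ is checked along the way. The master checks the shape of the prefix $u\#c^xv\$$, and $A_2$ and $A_3$ check the shape of the suffix; they signal a format violation through a dedicated tag, on which the master rejects.

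The main obstacle is the timing and the forgetting. Every state the master passes through after a query must be a state of $A_2$ or $A_3$, suitably tagged, so the phase structure must be encoded entirely in the queried components. I also have to make sure that $A_2$ and $A_3$ never halt prematurely; once they are done they loop on $\lambda$ or stay on $\eoi$. Finally, I will verify the inequality that the slow $A_3$ always arrives at the second $v$ after the master reaches the first $v$, for all $u,v,x$, including $u=\lambda$ or $v=\lambda$. If speed $1/3$ turns out to be too tight in some boundary case, I would simply slow $A_3$ down further to speed $1/4$.
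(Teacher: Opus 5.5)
Your plan is the paper's: $A_2$ runs ahead to the second copy of $u$ while $A_1$ waits on the first symbol and queries it, and the two copies of $u$ are compared in lockstep. Then $A_1$ skips $c^x$, $A_3$ is slowed down by a factor $3$ so that it reaches the second $v$ only after $A_1$ reaches the first one, and $A_1$ waits for $A_3$ by repeated queries. \textbf{The error is the half-speed device you add for the lockstep comparison.} In a communication step no component moves, and in non-returning mode every component not in a query state keeps its state. This holds in the model of~\cite{bordihncomputational2012} and in Definition~\ref{DefSuccConfigRel}. So $A_2$ and $A_3$ make transitions only in ordinary steps, and a round of one ordinary step plus one communication step advances $A_2$ by exactly one transition, not two. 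As you describe it, $A_2$ alternates a $\lambda$-step with a reading step in the second copy, while $A_1$ reads one symbol per round. Then $A_1$ receives each symbol twice and compares $u(2)$ with $u(1)$, so already $ab\#\$ab\# \in L_{rc}$ is rejected. In addition, a $\lambda$-step cannot store ``the symbol it is currently on'', because there is no lookahead.

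The repair is to drop the slow-down: $A_2$, and $A_3$ after the second $\#$, read and store one symbol per transition. This is what the paper's sketch does (``$A_2$ runs ahead while always storing the symbol it reads''). Your timing must then be counted in ordinary steps. $A_1$ reaches the first $v$ after about $L+|u|+1+x$ of them, and $A_3$ reaches the second $v$ after about $3(L+|u|+1)$. The inequality still holds since $L>x$.

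A further detail, which the paper's sketch also glosses over, concerns the start of the $v$ comparison. In the forgetting model a component cannot inspect a symbol without moving past it, so $A_1$ cannot ``stop on the first symbol of $v$''. It notices the end of $c^x$ only by consuming $v(1)$ (or $\$$ if $v$ is empty). Because querying overwrites $A_1$'s state, that symbol is lost at its first query to $A_3$. One fix is to let $A_3$ record the first symbol of the first copy of $v$, or the fact that $v$ is empty, as it passes it. $A_3$ then checks this itself against the second copy, so $A_1$ only compares from the second symbol of $v$ on. With these two changes your construction goes through, and the extension to $k>3$ and to $\Lclassk{nfDCPCFA}{k}$ is as you say.
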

	\begin{proof}[Sketch]
		Let $A$ be a DCPCFA($3$) with the components~$A_1$,~$A_2$ and~$A_3$ where~$A_1$ is the only component capable of querying. First,~$A_2$ runs ahead while always storing the symbol it reads in its state. Meanwhile,~$A_1$ continuously queries~$A_2$ to learn its current input while waiting at the beginning of the word, and~$A_3$ slowly moves forward by cycling through three states on every input symbol. After~$A_2$ has read the symbol~$\$$, component~$A_1$ also starts moving and compares the symbols of the two occurrences of the subword~$u$ by querying~$A_2$ which is reading the second occurrence. \\
		Component~$A_2$ reaches~$\$$ after~$\len{u}+1+x+\len{v}+1$ steps and the comparison of~$u$ takes~$\len{u}$ steps. After the comparison,~$A_1$ runs over the block of~$c$'s which takes~$x$ steps. Once~$A_1$ has reached the first symbol of the subword~$v$, it starts querying~$A_3$ to know when it reaches the second $\#$. As~$A_3$ is slowed down, it arrives at the second~$\#$ after $3\cdot(\len{u}+1+x+\len{v}+1+\len{u}+1)$ steps.\\ Since $3\cdot(2\len{u}+x+\len{v}+3) > 2\len{u}+1+2x+\len{v}+1$ for~$x\geq 0$ and $u,v\in\Sigma^*$,~$A_1$ will definitely arrive at the first occurrence of~$v$ before~$A_3$ arrives at the second~$\#$. Thus,~$A_1$ can wait for the arrival of~$A_3$ and then compare the occurrences of subword~$v$ by querying~$A_3$. 
	\end{proof}
	
	However, the proposed strict inclusion of $\Lclassk{DCPCFA}{k}$ in $\LclasskFA{D}{k}$ still holds and it can be proved that $\Lclassk{nfDCPCFA}{k}$ is strictly included in $\LclasskFA{D}{k}$ as well.
	Consider the language $$\Lrcc=\{\ u\# c^xv\$vc^y\# u \mid u,v \in \{a,b\}^*,~x,y\geq0\ \}\, .$$
	
	\begin{theorem}
		$\Lrcc$ belongs to $\Lclassk{DRPCFA}{k}$ (and thus to $\LclasskFA{D}{k}$ and $\Lclassk{nfDRPCFA}{k}$) but not to $\Lclassk{nfDCPCFA}{k}$ (and thus not to $\Lclassk{DCPCFA}{k}$) for all $k\geq1$.
	\end{theorem}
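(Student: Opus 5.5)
The statement has two parts. The positive part needs a concrete deterministic construction. The negative part needs an argument that no nfDCPCFA($k$) can do the two comparisons, and that argument is where the difficulty lies. Since $\Lrcc$ is not regular, the positive part cannot hold for $k=1$. I read it as: $\Lrcc$ belongs to $\Lclassk{DRPCFA}{k}$ for every $k$ at least some small constant (I aim for $k=3$), and it does not belong to $\Lclassk{nfDCPCFA}{k}$ for any $k\geq 1$. The inclusions in brackets then follow from results already stated: $\Lclassk{DCPCFA}{k}\subseteq\Lclassk{nfDCPCFA}{k}$ by Theorem~\ref{Th-PCFAinnfPCFA}, and $\Lclassk{nfDRPCFA}{k}=\LclasskFA{D}{k}$ by the theorem on non-centralized systems above.

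\textbf{Positive part.} I would first build a $3$-DFA for $\Lrcc$.
\begin{enumerate}
\item Head~$3$ runs to the symbol $\$$ and stops on the first symbol after it.
\item Head~$2$ skips $u\#c^x$ and stops on the first symbol of the first $v$.
\item Heads~$2$ and~$3$ then move in lockstep and compare the two copies of $v$ symbol by symbol.
\item The comparison is correct only if head~$3$ then reads $c^y\#$ at exactly the moment head~$2$ reads $\$$.
\item Head~$3$ skips $c^y\#$ and stops at the second $u$.
\item Head~$1$ has waited at the left end the whole time. Now heads~$1$ and~$3$ compare the two copies of $u$.
\item Heads~$1$ and~$3$ must reach $\#$ and $\eoi$ simultaneously.
\end{enumerate}
The finite control also checks the format $\{a,b\}^*\#c^*\{a,b\}^*\$\{a,b\}^*c^*\#\{a,b\}^*$. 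This gives $\Lrcc\in\LclasskFA{D}{3}$. Membership in $\Lclassk{DRPCFA}{3}$ then follows from the known equality of deterministic returning PCFA($k$) and $k$-DFA from~\cite{bordihncomputational2012}. Alternatively, one can simulate the $3$-DFA directly in returning mode, in the style of the proof of Theorem~\ref{Th-kDFA-in-nfDRCPCFAk}.

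\textbf{Negative part, reduction.} Let $A$ be an nfDCPCFA($k$) whose only querying component is $A_1$. The key observation is this: in a centralized non-returning system, $A_2,\dots,A_k$ are never modified by a communication step, and every step is synchronous. So each $A_i$ with $i\geq2$ runs as an autonomous deterministic one-way finite automaton with stationary moves. Its state and head position at time $t$ depend only on the input, not on $A_1$.

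Verifying $\Lrcc$ requires comparing the two copies of $u$ and the two copies of $v$. By a standard crossing/counting argument, information about the first $u$ (resp.\ first $v$) can only be carried by $A_1$'s finite state, or by a component physically located in that block. So for long random $u,v$ there must be a time at which one component reads inside the first $u$ while another reads inside the second $u$, and similarly for $v$. Otherwise too many distinct $u$ (resp.\ $v$) pass through a bounded-size joint configuration, and a cut-and-paste argument yields acceptance of a word outside $\Lrcc$.

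\textbf{Negative part, main obstacle.} The core step is to defeat these pairings by choosing $x$ and $y$. On a long block of $c$'s, each autonomous component enters a cycle. Its dwell time on $c^x$ is therefore an affine function of $x$ with a slope depending on that component, up to the period. I would choose $x$ and $y$ as huge multiples of the product of all periods, with $y\gg x\gg|u|,|v|$.

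The component pairing the two $v$'s must be at the first $v$ after crossing $c^x$ and at the second $v$ at nearly the same time. The component pairing the two $u$'s must stay in the first $u$ until another reaches the second $u$ after $c^y$. If both members of a pair are autonomous, their relative timing is fixed by their slopes. Scaling $x$ and $y$ independently then breaks at least one of the two synchronisations. So each pair must contain $A_1$. But $A_1$ is one-way: it cannot be in the first $u$ when the partner reaches the second $u$ and also be in the first $v$ later, since the first $v$ comes after the first $u$ and the second $u$ comes last.

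Making this timing argument rigorous is the main work. In particular, one must rule out $A_1$ mixing its roles by learning partial information from several autonomous components through the aggregation function. I would handle that by bounding the number of distinguishable $A_1$-states at the crucial cut and applying pigeonhole over the $2^{|u|}$ choices of $u$.
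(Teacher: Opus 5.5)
Your positive half is fine: a $3$-DFA plus the known equality of $\mathcal{L}(\mathrm{DRPCFA}(k))$ and $\mathcal{L}(k\text{-DFA})$ does the job, and the paper's direct DRPCFA($3$) sketch amounts to the same. The negative half has a genuine gap at the step ``each pair must contain $A_1$''. Your justification only controls the offset that two autonomous components build up on the $c$-blocks. Two components can instead build the required offset $|v|+1$ on the $u$-block, where their speeds may differ, and then run with the same slope on $c^x$; then no choice of $x$ and $y$ has any effect. Concretely, restricted to $|u|=|v|$ there is an nfDCPCFA($3$) your argument would have to refute but cannot. $A_3$ advances one cell per ordinary step throughout. $A_2$ spends two steps on each symbol of $u\#$ and then one step per cell, so afterwards it trails $A_3$ by exactly $|u|+1=|v|+1$ cells and reads $v_1(j)$ precisely when $A_3$ reads $v_2(j)$. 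Both keep their last two symbols in their state. $A_1$ stays on the first input symbol and queries $A_2$ and $A_3$ alternately to compare the two copies of $v$. It then keeps querying $A_3$ until $A_3$ has passed the second $\#$, and finally compares $u$ in lockstep with $A_3$. This system accepts every word of $L_{rcc}$ with $|u|=|v|$. It rejects every spliced word $u'c^xv\$vc^y\#u$ or $uc^xv'\$vc^y\#u$ with $|u|=|u'|=|v|=|v'|$, for all $x,y$. So you must decouple $|v|$ from $|u|$, say by making $|v|$ larger than $\max_i c_i\cdot(|u|+1)$ plus a constant, so that no offset of $|v|+1$ can accumulate before the $c$-block. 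You then still have to prove that no autonomous components can stay aligned on the two copies of $v$ for most positions. That proof must allow content-dependent speeds on $u$ and $v$, comparisons split among several pairs, and $A_1$ combining partial information from several partners. This is exactly the part you leave open, so the negative half is not proved. Two smaller points: the pairing condition must demand overlap covering most positions, not a single time instant; and your counting must include the $O(\log n)$ bits that head positions and timing can carry, not only $A_1$'s state.

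The paper proceeds differently and has no pairing lemma. It uses only that a non-querying component is at least at cell $\lfloor t/c_i\rfloor$ after $t$ steps, and fixes $|u|=|v|=n$, $x=n^2$, $y=n^3$. It splits according to which component first reaches the second $\#$ and where $A_1$ is at that moment. In each case it pigeonholes over the $2^n$ choices of $u$ or $v$ at one configuration and then splices. Its Cases 1 and 2.1, where $A_1$ has left the first $u$ when the second $u$ is first entered, give a clean cut-and-paste that can replace your $u$-pairing and one-wayness argument. Be aware, however, that its Subcase 2.2 uses the same family with $|u|=|v|$, and that subcase is exactly where a comparison of the $v$'s by non-querying components must be excluded. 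The system above falls into that subcase. The splice $uc^xv'\$vc^y\#u$ there is not justified, because the non-querying components have read both copies of $v$ before the cut, and $A_1$ need not ever reach $\$$. So the missing idea is needed on either route: decouple $|u|$ and $|v|$, and exclude alignment of autonomous components on the two copies of $v$.
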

	
	\begin{proof}
		Due to the proof given in~\cite{bordihncomputational2012}, it is sufficient to prove the statement for~$k\geq3$.\\
		The construction of the DRPCFA($3$) that accepts $\Lrcc$ is similar to the construction of the automata system that accepts $L_{rc}$ given in~\cite{bordihncomputational2012} and will therefore not be repeated in detail. The idea is once again that some non-querying component, w.l.o.g.~$A_2$, runs ahead to the second occurrence of~$u$ while the querying component~$A_1$ waits at the beginning of the word. Component $A_1$ then uses $A_2$ to compare the two occurrences of $u$. The second non-querying component~$A_3$ meanwhile runs ahead to the separating symbol~$\$ $ where it waits until it is queried by~$A_1$ which signals the start of the comparison of the subword~$v$. This construction heavily relies on the returning property of the system. This characteristic allows any non-querying component to wait on a marker until an incoming query forces it to return to its initial state where it can start to move again.\\
		%	uses the returning property of the automata system through which the non-querying components take note of being queried by returning to their initial state. \\
		%
		
		To prove that $\Lrcc$ is not in $\Lclassk{nfDCPCFA}{k}$, assume that there is an nfDCPCFA($k$) ($k\geq3$)~$A$ that accepts~$\Lrcc$. Let~$A$ be defined as usual, see Definition~\ref{DefnfPCFA}. Without making assumptions on the concrete construction, we only consider the components' positions and states.
		An nfDCPCFA has a few limitations caused by the fact that only one component (\obda~$A_1$) is capable of initiating communication. For one,~$A_1$ is the only component that can wait an arbitrary amount of steps. For all other components~$A_i$, $\nik{2}$, there is some constant $1\leq c_i\leq \len{S_i}$ which is the maximum factor by which the computation can be slowed down. This delay can be achieved by performing stationary moves while cycling through~$c_i$ states. If a component stays on one symbol for more than $c_i$ steps, it must enter a state twice. For deterministic automata this always means entering a loop that cannot be escaped. If a loop is entered, a component can only be used to compare its one current input symbol but is otherwise useless. We will disregard all useless components and assume that in the following sections all components are usable. With that, for~$\nik{2}$, any component~$A_i$ is at least at position~$\lfloor \frac{t}{c_i} \rfloor$ on the input tape after~$t$ steps. \\	
		%	Since the order of the components is irrelevant, \obda we will address the currently fastest component (meaning closest to the end of the word) as~$A_2$ and the slowest component (meaning closest to the beginning of the word) as~$A_k$. \\
		
		We now consider computations on words from $\{\wrcc{}{} \in \Lrcc \mid \len{u}=\len{v}=n,~x = n^2,~y = n^3 \}$ with some large enough $n>2\cdot S_{max}$ for  $S_{max}=max\{\len{S_i} \mid \ik \}$. There are~$2^{2n}$ different words~$w=\wrcc{}{}$ in the given set of words. We will only consider configurations where all communication requests have been resolved, i.e. where all components are in innate states.\\

		\begin{case}\label{Case1-Lrcc}
			The querying component~$A_1$ reaches the second $\#$ first. When this occurs, at least $ \len{u \# c^xv\$vc^y} = n^3 + n^2  + 3n + 2$ steps have passed. Since $\lfloor \frac{n^3 + n^2  + 3n + 2}{c_i} \rfloor > n^2+2n+2$, all components~$A_i$, $\nik{2}$, have at least reached~$\$$ when~$A_1$ arrives at the second~$\#$.\\
			Therefore, while~$A_1$ is positioned on the second occurrence of~$\#$, there are $\len{\$vc^y}=n^3+n+1$ possible positions for each non-querying component~$A_i$, $\nik{2}$. As there are $2^{n}$ different words~$u\in \{a,b\}^*$ with~$\len{u}=n$, there are just as many words~$w$ from the set of words described above that differ in the subword~$u$ but are otherwise identical. Let there be some~$r_i$, for~$2\leq i \leq k$, with~$n^2+2n+2\leq r_i < \len{w}-(n+1)$ so that~$w(r_i\colon)$ are suffixes of~$w$ starting somewhere behind~$\$$ but containing at least the subword~$\#u$. There has to be at least one configuration to which $$\frac{2^n}{\len{S_1}\cdot \prod_{i=2}^{k}(\len{S_i}\cdot(n^3+n+1))} \geq \frac{2^n}{(S_{max}\cdot(n^3+n+1))^k} 
			%= 2^{n-k(log_2(S_{max}\cdot(n^3+n+1))}
			\geq 2^{n-c'\cdot log_2(n)}\geq 2$$ words lead for some $c'>0$. With that, there are two words $w=\wrcc{}{}$ and $w'=\wrcc{'}{}$ with~$u\neq u'$ for which the following accepting computations exist:
			\begin{align*}
				(s_{0,1}w\eoi,s_{0,2}w\eoi,\dots,s_{0,k}w\eoi) 
				%&\tstep^* (s_1u\eoi,s_2c^{y-m_2}\#u\eoi,\dots,s_k\$vc^y\#u\eoi) \\
				&\tstep^* (s_1\#u\eoi,s_2w(r_2\colon)\eoi,\dots,s_kw(r_k\colon)\eoi) \\
				&\tstep^* (s_{f,1}z_1\eoi,s_{f,2}z_2\eoi,\dots,s_{f,k}z_k\eoi)\\
				\text{and}&\\	
				(s_{0,1}w'\eoi,s_{0,2}w'\eoi,\dots,s_{0,k}w'\eoi) 
				%&\tstep^* (s_1u'\eoi,s_2c^{y-m_2}\#u'\eoi,\dots,s_k\$vc^y\#u'\eoi) \\
				&\tstep^* (s_1\#u'\eoi,s_2w'(r_2\colon)\eoi,\dots,s_kw'(r_k\colon)\eoi) \\
				&\tstep^* (s_{f,1}'z_1'\eoi,s_{f,2}'z_2'\eoi,\dots,s_{f,k}'z_k'\eoi)			
			\end{align*}
			with $s_{0,i},s_i,s_{f,i},s_{f,i}' \in S_i$, for $\ik$, where $(s_{f,1}z_1\eoi,s_{f,2}z_2\eoi,\dots,s_{f,k}z_k\eoi)$ and $(s_{f,1}'z_1'\eoi,s_{f,2}'z_2'\eoi,\dots,s_{f,k}'z_k'\eoi)$  are accepting configurations.
			Consider the word $w''=u'c^xv\$vc^y\#u$. Note that $w''=u'w(\len{u}+1\colon)$, i.e. words $w$ and $w''$ share the same suffix.
			The stated accepting computations for $w$ and $w'$ imply the existence of a computation
			\begin{align*}
				(s_{0,1}w''\eoi,s_{0,2}w''\eoi,\dots,s_{0,k}w''\eoi) 
				%&\tstep^* (s_1u\eoi,s_2c^{y-m_2}\#u\eoi,\dots,s_k\$vc^y\#u\eoi) \\
				&\tstep^* (s_1\#u\eoi,s_2w(r_2\colon)\eoi,\dots,s_kw(r_k\colon)\eoi) \\
				&\tstep^* (s_{f,1}z_1\eoi,s_{f,2}z_2\eoi,\dots,s_{f,k}z_k\eoi)
			\end{align*}		
			with which~$A$ accepts~$w''$ even though $w''\notin L_{rcc}$.\\
			%\len{u} + x + 2\len{v} + y$ steps have passed. Since $\lfloor \frac{\len{u} + x + 2\len{v} + y}{c_i} \rfloor > \len{u}$ for  $x> c_i \len{u}$ all components $A_i$, $\nik{2}$, have already at least reached the first occurrence $\#$. \\
		\end{case}
		
		\begin{case}
			Some non-querying component (\obda ~$A_2$) reaches the second occurrence of~$\#$ first. As in Case~1, at least $ \len{u\# c^xv\$vc^y}$ steps have passed until~$A_2$ has reached the symbol~$\#$ and all other non-querying components~$A_i$, for $3\leq i \leq k$, have at least reached the symbol~$\$$.
			\begin{subcase}
				The querying component~$A_1$ has at least reached the first occurrence of~$\#$ and is therefore on a position somewhere in the subword $\#c^xv\$vc^y$.~$A_2$ is positioned on the second occurrence of~$\#$. The remaining non-querying components are positioned somewhere in the subword~$\$vc^y$. Since $$\frac{2^n}{\len{S_1}\cdot (n^3+n^2+2n+2) \cdot \len{S_2} \cdot \prod_{i=3}^{k}(\len{S_i}\cdot(n^3+n+1))} > 2^{n-c'' \cdot log_2 (n) } \geq 2$$ for some constant~$c''>0$, as in Case~\ref{Case1-Lrcc}, there are  words~$w=\wrcc{}{}$,~$w'=\wrcc{'}{}$ where~$u\neq u'$ with computations		
				\begin{align*}
					(s_{0,1}w\eoi,s_{0,2}w\eoi,\dots,s_{0,k}w\eoi) 
					%&\tstep^* (s_1c^x\#v\$vc^y\#u\eoi,s_2\#u\eoi,\dots,s_k\$vc^y\#u\eoi) \\
					&\tstep^* (s_1w(r_1\colon)\eoi,s_2\#u\eoi,\dots,s_kw(r_k\colon)\eoi) \\
					&\tstep^* (s_{f,1}z_1\eoi,s_{f,2}z_2\eoi,\dots,s_{f,k}z_k\eoi)\\
					\text{and}&\\
					(s_{0,1}w'\eoi,s_{0,2}w'\eoi,\dots,s_{0,k}w'\eoi) 
					%&\tstep^* (s_1c^x\#v\$vc^y\#u'\eoi,s_2\#u'\eoi,\dots,s_k\$vc^y\#u'\eoi) \\
					&\tstep^* (s_1w'(r_1\colon)\eoi,s_2\#u'\eoi,\dots,s_kw'(r_k\colon)\eoi) \\
					&\tstep^* (s_{f,1}'z_1'\eoi,s_{f,2}'z_2'\eoi,\dots,s_{f,k}'z_k'\eoi)
				\end{align*}
				where $s_{0,i},s_i,s_{f,i},s_{f,i}' \in S_i$, for all $\ik$, and the configurations $(s_{f,1}z_1\eoi,s_{f,2}z_2\eoi,\dots,s_{f,k}z_k\eoi)$ and $(s_{f,1}'z_1'\eoi,s_{f,2}'z_2'\eoi,\dots,s_{f,k}'z_k'\eoi)$ are accepting configurations. For all $3\leq i\leq k$, let $n^2+2n+2 \leq r_i < \len{w}-(n+1)$ and $n+1 \leq r_1 < \len{w}-(n+1)$.
				This implies the existence of an accepting computation 
				\begin{align*}
					(s_{0,1}w''\eoi,s_{0,2}w''\eoi,\dots,s_{0,k}w''\eoi) 
					&\tstep^* (s_1w(r_1\colon)\eoi,s_2\#u\eoi,\dots,s_kw(r_k\colon)\eoi) \\
					&\tstep^* (s_{f,1}z_1\eoi,s_{f,2}z_2\eoi,\dots,s_{f,k}z_k\eoi)
				\end{align*}
				for $w''=u'c^xv\$vc^y\#u$ with which the automata system $A$ accepts $w''$ even though $w'' \notin L_{rcc}$.		
			\end{subcase}
			
			\begin{subcase}
				The querying component~$A_1$ has not reached the first occurrence of~$\#$ yet. It will take~$A_1$ at least $n^2$ steps to read the remaining symbols of the prefix $u\#c^x$. In this time, the components~$A_i$, for~\mbox{$\nik{2}$}, will advance by at least~$\lfloor \frac{n^2}{c_i}\rfloor$ steps and therefore
				%Since $x> c_i\len{v}$ 
				%$n^2 > c_in$ for $n>c_i$ 
				pass the second occurrence of subword~$v$. Then, it takes at least another~$n$ steps for~$A_1$ to reach~$\$$. At this point, the component~$A_2$ has already reached the end of input and the remaining non-querying components are somewhere in the suffix $c^y\#u\eoi$.
				There are~$2^n$ words that differ in the subword~$v$ but are otherwise identical. 
				%There are $\len{S_1}\cdot \len{S_2} \cdot \prod_{i=3}^{k}(\len{S_i}\cdot(n^3+n+2)) $ different configurations for this situation. 
				There is at least one configuration in which at least $$\frac{2^n}{\len{S_1}\cdot \len{S_2} \cdot \prod_{i=3}^{k}(\len{S_i}\cdot(n^3+n+2)} > 2^{n-c'''\cdot log_2(n)} \geq 2$$ words land. With that, there are two words~$w=\wrcc{}{}$ and~$w'=\wrcc{}{'}$ with $v\neq v'$ so that:
				\begin{align*}
					(s_{0,1}w\eoi,s_{0,2}w\eoi,\dots,s_{0,k}w\eoi) &\tstep^* 
					(t_1w(r_1\colon)\eoi,t_2\#u\eoi,t_3w(r_3\colon)\eoi,\dots,t_kw(r_k\colon)\eoi) \\
					&\tstep^*  (s_1\$vc^y\#u\eoi,s_2\eoi,s_3w(m_3\colon)\eoi,\dots,s_kw(m_k\colon)\eoi) \\
					&\tstep^* (s_{f,1}z_1\eoi,s_{f,2}z_2\eoi,s_{f,3}z_3\eoi,\dots,s_{f,k}z_k\eoi)\\
					\text{and } \\
					(s_{0,1}w'\eoi,s_{0,2}w'\eoi,\dots,s_{0,k}w'\eoi) &\tstep^* 
					(t_1'w'(r_1'\colon)\eoi,t_2'\#u\eoi,t_3'w'(r_3'\colon)\eoi,\dots,t_k'w'(r_k'\colon)\eoi) \\
					%&\tstep^*  (t_1\$v'c^y\#u\eoi,t_2\eoi,t_3c^{y-m_3'}\#u\eoi,\dots,t_kc^{y-m_k'}\#u\eoi) \\
					&\tstep^*  (s_1\$v'c^y\#u\eoi,s_2\eoi,s_3w'(m_3\colon)\eoi,\dots,s_kw'(m_k\colon)\eoi) \\
					&\tstep^* (s'_{f,1}z_1'\eoi,s'_{f,2}z_2'\eoi,s'_{f,3}z_3'\eoi,\dots,s'_{f,k}z_k'\eoi)
				\end{align*}
				where $(s_{f,1}z_1\eoi,s_{f,2}z_2\eoi,\dots,s_{f,k}z_k\eoi)$ and $(s'_{f,1}z_1'\eoi,s'_{f,2}z_2'\eoi,s'_{f,3}z_3'\eoi,\dots,s'_{f,k}z_k'\eoi)$ are accepting configurations with $s_{0,i},s_i,s_{f,i},s'_{f,i},t_i,t_i',t_i'' \in S_i$, for all $\ik$. Further, $1\leq r_1,r_1',r_1'' \leq n$, and, for $3\leq i\leq k$, let $n^2+2n+2 \leq r_i, r_i', r_i'' < \len{w}-(n+1)$ and $n^2+3n+2 < m_i \leq \len{w}+1$.
				This implies that there is a computation
				\begin{align*}
					(s_{0,1}w''\eoi,s_{0,2}w''\eoi,\dots,s_{0,k}w''\eoi) &\tstep^* 
					(t_1''w''(r_1''\colon)\eoi,t_2''\#u\eoi,t_3''w''(r_3''\colon)\eoi,\dots,t_k''w''(r_k''\colon)\eoi) \\
					%			&\tstep^*  (t_1\$v'c^y\#u\eoi,t_2\eoi,t_3c^{y-m_3'}\#u\eoi,\dots,t_kc^{y-m_k'}\#u\eoi) \\
					&\tstep^*  (s_1\$vc^y\#u\eoi,s_2\eoi,s_3w(m_3\colon)\eoi,\dots,s_kw(m_k\colon)\eoi) \\
					&\tstep^* (s_{f,1}z_1\eoi,s_{f,2}z_2\eoi,s_{f,3}z_3\eoi,\dots,s_{f,k}z_k\eoi)
					%			&\tstep^*  (s_1\$v'c^y\#u\eoi,s_2\eoi,s_3c^{y-m_3'}\#u\eoi,\dots,s_kc^{y-m_k'}\#u\eoi) \\
					%			&\tstep^* (t_{f,1}z_1'\eoi,t_{f,2}z_2'\eoi,t_{f,3}z_3'\eoi,\dots,t_{f,k}z_k'\eoi)
				\end{align*}
				for $w''=uc^xv'\$vc^y\#u$ with which~$A$ accepts $w''$ even though  $w'' \notin L_{rcc}$. \qedhere
			\end{subcase}
		\end{case}
	\end{proof}
	%
	%{superHierarchy.tex}	
	\begin{figure}[t]
		\centerline{
			\xymatrix@R=1.5em{
				&NL& \\
				&\txt{%
					$\LclasskFA{N}{k}=  \Lclass{PCFA} = \Lclass{RPCFA}$\\
					$= \Lclass{nfPCFA} = \Lclass{nfRPCFA}$\\
					$= \Lclass{nfRCPCFA} = \Lclass{nfCPCFA}$
				} \ar[u]&\\
				{\Lclass{RCPCFA}} \ar@{-->}[ur]&\txt{%
					$\LclasskFA{D}{k}= \Lclass{DPCFA} = \Lclass{DRPCFA}$\\
					$= \Lclass{nfDPCFA} = \Lclass{nfDRPCFA}$\\
					$ = \Lclass{nfDRCPCFA} $
				} \ar[u] & {\Lclass{CPCFA}} \ar@{-->}[ul]\\
				&	{\Lclass{nfDCPCFA}} \ar[u]&\\
				{\Lclass{DRCPCFA}} \ar@{-->}[uu] \ar@{-->}[uur] && {\Lclass{DCPCFA}} \ar@{-->}[ul] \ar[uu] \\
				&REG \ar[ul] \ar[ur]&\\
		}}
		\caption{Dashed arrows denote inclusions and solid arrows denote proper inclusions. NL denotes the class of languages accepted by nondeterministic Turing machines working in logarithmic space.}\label{fig1}
	\end{figure}
	\section{Conclusion}
	In this paper, it was proved that in the nondeterministic case all variants of nfPCFA of degree~$k$ are as powerful as nondeterministic finite automata with $k$ heads. In the deterministic case, all variants of  nfPCFA of degree~$k$ except centralized non-returning systems are as powerful as deterministic finite automata with $k$ heads. It was further shown, that the mentioned special case of non-forgetting deterministic centralized PCFA in non-returning mode are strictly less powerful than \mbox{$k$-DFA}. 
	Figure~\ref{fig1} visualizes the results from this paper combined with previous results from~\cite{martin-videparallel2002} and~\cite{bordihncomputational2012}. It is to be noted that in all considered cases the number of components equals the number of heads of the corresponding multi-head automata, i.e. in Figure~\ref{fig1} the language families of (nf)PCFA of degree~$k$ are considered. \\
	
	The concrete relation between forgetting and non-forgetting systems in centralized mode remains open. Further questions regarding non-forgetting PCFA that are still open are, among others, whether they are more efficient in respect to different complexity measures, e.g. the number of components, states or communications. Furthermore, questions on undecidability are yet to be investigated for the non-forgetting model and the two-way case has not been researched for any model of PCFA thus far.

	\bibliographystyle{eptcs}
	\bibliography{automata}

\end{document}